\newtheorem{theorem}{Theorem}[section]
\newaliascnt{lemma}{theorem}
\newtheorem{lemma}[lemma]{Lemma}
\newaliascnt{claim}{theorem}
\newaliascnt{corollary}{theorem}
\newtheorem{corollary}[corollary]{Corollary}%
\newaliascnt{proposition}{theorem}
\theoremstyle{definition}
\newtheorem{definition}{Definition}
\newtheorem{remark}{Remark}
\newenvironment{proofof}[1]{\vspace{0.1in} {\sc Proof of #1.}}{\hfill\qed}
\newenvironment{proofsk}{\vspace{0.1in} {\sc Proof sketch.}}{\hfill\qed}
\renewcommand{\comment}[1]{}
\newcommand{\AutoAdjust}[3]{\mathchoice{ \left #1 #2  \right #3}{#1 #2 #3}{#1 #2 #3}{#1 #2 #3} }
\newcommand{\Xcomment}[1]{{}}
\newcommand{\InBrackets}[1]{\AutoAdjust{[}{#1}{]}}
\newcommand{\Ex}[2][]{\operatorname{\mathbf E}_{#1}\InBrackets{#2}}
\def\expect{\Ex}
\newcommand{\argmax}{\operatorname{argmax}}
\newcommand{\argmin}{\operatorname{argmin}}
\newcommand{\feasset}{\mathcal I}
\newcommand{\val}{v}
\newcommand{\vali}[1][i]{\val_{#1}}
\newcommand{\alloc}{x}
\newcommand{\payment}{p}
\newcommand{\dist}{F}
\DeclareMathOperator{\REV}{Rev}
\newcommand{\reserve}{r}
\newcommand{\monres}{\reserve^{\mathrm{mon}}}
\newcommand{\winset}{W}
\newcommand{\vcgprice}{\payment^{\mathrm{VCG}}}
\newcommand{\highlazy}{h^{\mathrm{L}}_{-i}}
\newcommand{\higheager}{h^{\mathrm{E}}_{-i}}
\newcommand{\paymenteager}{\payment^{\mathrm{E}}}
\newcommand{\conddist}{\dist^{\mathrm{cond}}}
\newcommand{\paymentvcgl}{\payment^{\mathrm{VCGL}}}
\newcommand{\paymentvcge}{\payment^{\mathrm{VCGE}}}
\newcommand{\losesetprime}{W'}
\newcommand{\basecommon}{U}
\newcommand{\matroid}{\mathcal M}
\newcommand{\highrev}{H}
\newcommand{\lowrev}{L}
\newcommand{\larev}{\REV}
\DeclareMathOperator{\rev}{R}
\newcommand{\ratio}{\rho}
\newcommand{\ratiocond}{\rho^{\mathrm{cond}}}
\newcommand{\br}{\mathbf r}
\newcommand{\bs}{\mathbf s}
\newcommand{\bS}{\mathbf S}
\newcommand{\bsminusi}{\bs_{-i}}
\newcommand{\bSminusi}{\bS_{-i}}
\newcommand{\bF}{\mathbf F}
\newcommand{\set}{Z}
\newcommand{\tilR}{\tilde{R}}
\newcommand{\GVCGL}{GVCG-L$^*$}
\title{Approximate Revenue Maximization in Interdependent Value Settings
\thanks{Part of this work was done while the first and third authors were
visiting Microsoft Research. The first author is supported in part by NSF grants
CCF-1101429 and CCF-1320854, and the third author by NSF grant CCF-1016509.}} 
\author{
Shuchi Chawla \\
University of Wisconsin - Madison \\
Computer Sciences Department \\
\tt{shuchi@cs.wisc.edu}
\and Hu Fu \\
Microsoft Research \\
New England \\
\tt{hufu@microsoft.com}
\and Anna R. Karlin \\
University of Washington \\
Department of Computer Science and Engineering \\
\tt{karlin@cs.washington.edu}
}
\begin{document}

\begin{titlepage}
\maketitle
\begin{abstract}

  We study revenue maximization in settings where agents' values are
  interdependent: each agent receives a signal drawn from a
  correlated distribution and agents' values are functions of all of
  the signals. We introduce a variant of the generalized VCG auction
  with reserve prices and random admission, and show that this auction
  gives a constant approximation to the optimal expected revenue in
  matroid environments. Our results do not require any assumptions on
  the signal distributions, however, they require the value functions
  to satisfy a standard single-crossing property and a concavity-type
  condition.

\end{abstract}
\thispagestyle{empty}
\end{titlepage}

\section{Introduction}
Revenue maximization, a.k.a. {\em optimal auction} design, is a fundamental goal in mechanism design. 
The most common approach to the problem is Bayesian wherein it is assumed that players' values are
drawn from prior distributions known to the
designer. The seminal work of 
\citet{M81} on this topic assumes the agents' values are drawn from independent distributions. In this paper, we consider settings in which this independence assumption is unreasonable.  If the value of the item derives from its resale value, such as a house, the values of the agents are likely to be correlated.  Yet another example is fashion, whether haute couture or fancy cars, where demand is highly correlated.
Likewise, when the value of the item derives from the potential for making money from it,
the values are correlated, or perhaps even common. 
The classical example is an auction for the right to drill for oil in a certain
location~\cite{Wilson69}. Note though that even in settings such as this where the value is common, bidders might have different information about what that value actually is. For example, the value of an oil lease depends on how much oil there actually is, and the different bidders may have access to different assessments about this. Consequently, a bidder might change her own estimate of the value of the oil lease given access to the information another bidder has.

The following model due to \citet{MW82} has become standard for  auction
design in these correlated and common value settings, generally known as {\em
interdependent settings}: \citep[See also][]{Krishna10,Milgrom04}
\begin{itemize}
\item Each agent has a real-valued, private {\em signal}, say $s_i$
  for agent $i$. The set of signals $\bs = (s_1, s_2, \ldots, s_n)$ is
  assumed to be drawn from a known, correlated distribution.

The signals are meant to capture the information available to the
agents about the item. For example, in the setting of a painting, the
signal could be information about the provenance of the painting or
private information from experts. In the setting of oil drilling
rights, the signals could be information that engineers have about the
site based on geologic surveys, etc.
\item The {\em value of the item} to agent $i$ is a function
  $v_i(\bs)$ of the information or signals of {\em all} agents.  
  
  For example, in the {\em mineral rights model} \citep{Wilson69}, each
  signal $S_i= V + X_i$ might be a noisy version of the true value $V$
  of the oil field (say $X_i$ is $N(0,1)$), and $v_i(\bs)= E(V | s_1,
  \ldots, s_n)$. 

Another important case is when $v_i(\bs) = s_i +
  \beta \sum_{j\ne i}s_j$, for some $\beta \le 1$. This type of
  valuation function captures settings where an agent's value depends
  on his own personal appreciation for the item ($s_i$) and on resale
  value (as reflected by the total appreciation of the other
  agents) \citep{M81,Klemp98}.

\end{itemize}

Interdependent settings have received attention in the economics community for
quite some time \citep[see][]{Krishna10}.\footnote{Within the theoretical
computer science community, interdependent settings have received far less
attention and most of that only in the last few years.  \citep[See,
e.g.][]{R01,PP11, DFK11,BKL12,AABG13,Syrgkanis13,RT13,Li13}.} Perhaps the most
notable results here are due to~\citet{CM85,CM88} \citep[see also][]{MMR89,
MR92, Rahman14}. These papers consider single item auctions and show that if the
prior distribution from which $\bs$ is drawn satisfies a type of full rank
condition, and the valuation functions are known and satisfy a single-crossing
condition  (see Section \ref{sec:prelim}), then the auctioneer can construct an
auction that essentially extracts full surplus (that is, has expected revenue
equal to $\Ex{\max_i v_i (S_1, \ldots, S_n)}$.

From a practical perspective, the Cr\'emer-McLean auction is problematic in two
significant ways: First, the auction is not ex post individually rational.  That
is, after the fact, agents may regret having participated, and indeed may end up
being charged a very large amount to participate.  Second, the dependence 
on the signal distribution is complex in form and  can be numerically unstable. Moreover, agents need to know the prior distribution from which the signals are drawn.
For these reasons, this result has been criticized as being impractical
\citep[e.g., in][]{Milgrom04}.

These shortcomings have driven much recent progress in the
understanding of optimal auctions.  Addressing the first issue, i.e.,
strengthening the requirement for individual rationality so that
bidders are guaranteed non-negative utilities under any circumstance,
a series of recent papers \citep{Segal03,CE06,CE07, Vohra11, Li13, Lopomo00}
studies \emph{ex post equilibria},\footnote{Bidding truthfully is an
  {\em ex post equilibrium} if an agent does not regret having bid
  truthfully when the mechanism terminates, given that other agents
  bid truthfully. In other words, bidding truthfully is a Nash
  equilibrium at the end.  See Section~\ref{sec:prelim}.} culminating
in \citet{RT13}'s characterization of the expected revenue as the
``conditional virtual surplus''.  This characterization was used to
derive optimal auctions for symmetric matroid settings with various
assumptions on the distribution and the value functions (see
Section~\ref{sec:prelim} for details).  In particular, for the general interdependent setting, the value
distribution is required to satisfy a \emph{monotone hazard rate (MHR)} condition, a
property that holds for certain unimodal distributions with
light tails.

\vspace{0.1in}
\parbox{0.95\textwidth}{ In this work, we design auctions for a broad
  class of interdependent value settings that obtain approximately
  optimal revenue in ex post equilibrium in the absence of any
  distributional assumptions.} \par
\vspace{0.1in}

Our auctions are variants of the VCG auction with reserve
prices.  The VCG auction with reserve prices is known to be approximately optimal in independent private value settings under certain assumptions.  Myerson's characterization of the optimal auction implies
that VCG with monopoly reserve prices is optimal in the i.i.d. values
setting under a matroid feasibility constraint. In non-i.i.d. settings, the
optimal auction can be complex and may be impractical. \citet{HR09}
initiated the study of VCG auctions with reserve prices, as a model
for simple and yet approximately optimal auctions.

For interdependent settings that satisfy a single-crossing condition
(a condition also assumed by \citet{RT13} and virtually all literature
on these settings), \citet{Li13} showed that a generalized notion of
the VCG auction with appropriate reserve prices gives an
$e$-approximation to the optimal revenue for matroid settings with
generalized MHR distributions.  Unfortunately, for more general
distributions, such reserve-based VCG auctions perform poorly (see
example in Section~\ref{sec:intro-gen} below).



We introduce a new variant of VCG auctions: a random admission phase
followed by a VCG auction with reserve prices. We show that this
randomization preserves the incentive compatibility and the
approximately optimal performance of VCG auctions even when all
restrictions on the value distribution are lifted, although we will
need a certain condition on the value function.  As a special case of
our main result, we have:

\begin{theorem}
\label{thm:lookahead-general-interdependent-approx}
The randomized \GVCGL\ auction described in Section~\ref{sec:general}
is ex post incentive compatible, individually rational, and obtains a
constant fraction of the optimal revenue under a matroid feasibility
constraint, assuming that the valuation functions are 
additively separable functions of signals, and satisfy the single-crossing condition.
\end{theorem}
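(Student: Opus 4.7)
The plan is to establish the three properties---ex post incentive compatibility, individual rationality, and the constant-factor revenue guarantee---in that order, handling the first two as relatively routine consequences of the VCG structure together with single-crossing, and focusing the real work on the revenue comparison.

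For \textbf{ex post IC and IR}, the random admission phase is independent of bids, so it suffices to argue that, conditioned on any admitted subset of bidders, the subsequent generalized VCG auction with reserves is ex post IC and IR. Under single-crossing, a player who overbids can only change her allocation to win at a price that exceeds her true value, and underbidding symmetrically forfeits profitable allocations; this is precisely the content of the analogous lemma for GVCG with reserves shown earlier (in the spirit of \citet{Li13}), and randomizing the eligible set preserves it bid-by-bid.

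For the \textbf{revenue approximation}, I would aim to upper-bound the optimal expected revenue by the sum of two quantities, each of which is captured up to a constant by our mechanism. Concretely, I would fix an ex post IR and IC benchmark (e.g.\ the conditional virtual surplus characterization of \citet{RT13}) and decompose the optimal revenue on each signal profile $\bs$ into (i) a ``reserve'' part, where the winning bidder's virtual contribution looks like a monopoly price charged against her own signal $s_i$, and (ii) a ``competition'' part, which is bounded by the second-highest reported valuation at~$\bs$. In matroid settings the competition part is bounded by the VCG externality, which our GVCG-L$^*$ auction charges as payment; additive separability of $v_i(\bs)$ is what lets me cleanly split the effect of $s_i$ from the effect of $\bs_{-i}$ so that the second-highest VCG payment really does track the welfare loss induced by $i$'s presence.

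The \textbf{main obstacle} is controlling the reserve part without any regularity assumption on the signal distribution. This is where the random admission is essential: absent MHR the conditional distribution of the top signal given the others can concentrate mass at arbitrarily wild points, so a deterministic reserve cannot simultaneously exploit both ``high'' and ``low'' regimes. My plan is to mimic the core-tail / random-partition trick used in independent-value approximate revenue proofs: randomly split the bidders into two groups, use one group only to compute the reserve for a bidder in the other group (conditioned on the other signals), and argue that with constant probability the chosen reserve is a good quantile of the conditional distribution of the ``second-best'' valuation. Combined with the single-crossing monotonicity, this lets me show that the posted reserve extracts at least a constant fraction of the revenue contribution of the top bidder in expectation, even when the unconditional distribution is arbitrary. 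Summing the reserve-part bound and the VCG-part bound over the matroid using an exchange argument then yields the constant-factor guarantee claimed.
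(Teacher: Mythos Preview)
Your IC/IR sketch is fine, and your instinct to split the optimal revenue into a ``top-set'' part and a ``runner-up'' part is in line with the paper's decomposition $OPT \le \sum_{i\in W} R_i + v(W')$. But two key pieces of your plan do not match what is actually needed, and as stated would not go through.

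\textbf{The purpose of random admission.} You describe the random phase as a core--tail/random-partition device: exclude some bidders, use their reports to calibrate a reserve that is ``a good quantile'' of the conditional distribution for an admitted bidder. That is not what the mechanism does, and a quantile-style argument cannot work here absent any regularity assumption. In the randomized \GVCGL, the reserve for an admitted bidder~$i$ is computed from \emph{all} of $\bs_{-i}$, including signals of non-admitted bidders; nothing is ``held out'' to learn a price. The point of randomly dropping bidders is different: it ensures that with constant probability some agent $j\notin W$ becomes a winner in place of some $i\in W$. When that happens, the mechanism can charge $j$ a price of at least $v_j(0,\bs_{-j})$, because $j$'s conditional value given $\bs_{-j}$ is at least that. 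The crucial charging step is then
\[
v_j(\bs) \;\le\; v_j(s_i^*,\bs_{-i}) \;+\; \bigl(v_j(\bs)-v_j(s_i^*,\bs_{-i})\bigr) \;\le\; R_i \;+\; \tilde R_j,
\]
where the first piece is bounded by $v_i(s_i^*,\bs_{-i})\le R_i$ via a matroid matching lemma (an injective $f_{T'}:T'\to W$ with $v_j(s_i^*,\bs_{-i})\le v_i(s_i^*,\bs_{-i})$ for $i=f_{T'}(j)$), and the second piece uses additive separability. Your proposal has no analogue of this two-term bound on a single non-$W$ agent's value, and without it you cannot control $v(W')$.

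\textbf{Where additive separability enters.} You invoke separability to argue that the VCG externality ``tracks the welfare loss,'' i.e., for the competition term. In fact the VCG-threshold bound $R_i\ge v_i(s_i^*,\bs_{-i})\ge v_j(s_i^*,\bs_{-i})$ needs only single-crossing and matroid structure. Separability (more precisely, assumption (A.3): $\partial v_j/\partial s_i$ non-increasing in $s_j$) is used for the other piece, to get
\[
v_j(\bs)-v_j(s_i^*,\bs_{-i}) \;\le\; v_j(s_i,0,\bs_{-ij})-v_j(s_i^*,0,\bs_{-ij}) \;\le\; v_j(0,\bs_{-j}) \;\le\; \tilde R_j.
\]
This is exactly the step that fails in the example from the introduction if you try to recover everything from the top agent alone, and it is the reason random admission is needed. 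Your plan places separability in the wrong half of the argument and therefore misses the inequality that actually closes the proof. You will also need a basis-exchange argument to show $\Ex{v(T')}\ge c\cdot v(W')$ for the random winner set $T$; a bare ``sum over the matroid using an exchange argument'' does not capture this, since $T'$ depends on the random~$Z$.
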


Theorem~\ref{thm:lookahead-general-interdependent-approx} in fact
holds for a larger class of valuation functions that
includes, e.g., concave functions of additively separable functions of
signals (see Assumption (A.3) in Section~\ref{sec:prelim}). Although
Theorem~\ref{thm:lookahead-general-interdependent-approx} can be seen
as a simple vs.\@ optimal type of result, we emphasize that, for
non-MHR value distributions, no optimal auctions or approximately
optimal ones were previously known, even for selling a single item.

We now describe our results in more detail.

\subsection{Correlated Private Values}

We first consider the special case where the values ($v_i := v_i(\bs)
= s_i$) are privately known and drawn from a correlated distribution
$F$. The computational complexity of designing the revenue maximizing
mechanism in this setting at ex post Nash equilibrium\footnote{For
  correlated values, ex post equilibrium and dominant-strategy
  equilibrium coincide.  See Section~\ref{sec:prelim}.}  under the
assumption of ex post individual rationality has been largely
resolved. \citet{PP11} proved that the deterministic revenue maximization is
computationally hard even with 3 bidders. In some settings positive
results were obtained: \citet{DFK11} showed how to derive truthful-in-expectation mechanisms via linear programming when the joint
distribution of values is given explicitly; \citet{RT13} give optimal
auctions for matroid settings under the assumptions of affiliation and
a strong version of regularity.

\paragraph{Lookahead auctions for a single item.}

\citet{R01} took the approximation approach,
and proposed the following intuitve \emph{lookahead} auction:
choose the highest bidder as the tentative winner and then run an optimal
auction for the highest bidder conditioning on all other bidders' valuations and
the fact that this tentative winner's valuation is above all others'.  It is not
hard to see that this auction is dominant strategy incentive-compabible for
correlated bidders.  Moreover, with a simple and elegant proof, Ronen showed
that this auction gives at least half of the optimal revenue of any single-item
auction. \citet{DFK11} later extended these to a more general class of
lookahead auctions with better approximation ratios.

Our first result generalizes the lookahead auction to the matroid
setting for correlated bidders in the most natural way, and shows that
it is still a $2$-approximation.

\begin{theorem}
\label{thm:lookahead-matroid-approx-1}
The lookahead auction in Section~\ref{sec:corr} gives at least half of
the optimal revenue in any matroid setting for bidders with
arbitrarily correlated valuations.
\end{theorem}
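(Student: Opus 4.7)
My plan is to extend Ronen's single-item argument to the matroid setting.

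First I would define the matroid lookahead auction. Given bids $\bid$, compute the maximum-weight basis $W(\bid)$ of the matroid $\matroid$ (with deterministic tie-breaking). By standard matroid theory, for each $i$ there is a threshold $h_i(\bid_{-i})$ (a function of $\bid_{-i}$ alone) such that $i \in W(\bid)$ iff $\bid_i \geq h_i(\bid_{-i})$; by the exchange property, $h_i$ equals the value of some losing bidder that would swap in to replace $i$. The auction offers each tentative winner $i$ the take-it-or-leave-it price
\[
\price_i(\bid_{-i}) = \argmax_{\price \geq h_i(\bid_{-i})} \price \cdot \Prx{\val_i \geq \price \given \val_{-i} = \bid_{-i}},
\]
the revenue-optimal offer conditional on $\val_i \geq h_i$. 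Bidder $i$ wins and pays $\price_i$ iff $\val_i \geq \price_i$. Monotonicity of the allocation in $\bid_i$ together with the threshold-payment form gives DSIC and ex post IR.

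For the revenue bound, fix any DSIC, IR mechanism $\mech^*$ with allocation $A(\val)$, and decompose its revenue into contributions from ``aligned'' winners $A \cap W$ and ``displaced'' winners $A \setminus W$. I would show that each contribution is bounded by $\REV(\text{LA})$. For the aligned part, the standard conditional monopoly-revenue bound for DSIC mechanisms, combined with the conditioning $\val_i \geq h_i(\val_{-i})$ implicit in $i \in W$, gives
\[
\Ex{\payment^*_i \cdot \mathbf 1\{i \in A \cap W\}} \leq \Ex[\val_{-i}]{\max_{\price} \price \cdot \Prx{\val_i \geq \max(\price, h_i(\val_{-i})) \given \val_{-i}}} = \Ex[\val_{-i}]{\max_{\price \geq h_i(\val_{-i})} \price \cdot \Prx{\val_i \geq \price \given \val_{-i}}},
\]
because for $\price \leq h_i$ the inner quantity is maximized at $\price = h_i$, which is itself feasible in the restricted maximization. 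Summing over $i$, the aligned revenue is at most $\REV(\text{LA})$.

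For the displaced winners, ex post IR gives payment $\leq \val_i$. The key observation is that for any $i \in A \setminus W$, since $W$ is max-weight and $i \notin W$, adding $i$ to $W$ creates a unique cycle, so there exists $j \in W$ with $W - j + i$ a basis and hence $\val_i \leq h_j(\val_{-j})$ by the definition of $h_j$. A Brualdi-type basis-exchange argument then provides an injection $i \mapsto j(i) \in W \setminus A$ respecting this, so $\sum_{i \in A \setminus W} \val_i \leq \sum_{j \in W \setminus A} h_j \leq \sum_{j \in W} h_j$. Since the lookahead auction always extracts at least $h_j$ from each tentative winner $j \in W$ (the conditioning $\val_j \geq h_j$ guarantees a sale at the price $h_j$), we have $\REV(\text{LA}) \geq \Ex{\sum_{j \in W(\val)} h_j(\val_{-j})}$, so the displaced revenue is at most $\REV(\text{LA})$. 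Combining, $\REV(\mech^*) \leq 2\,\REV(\text{LA})$. The main technical obstacle is the combinatorial Brualdi-exchange step---in particular, handling the case where $A$ is not a full basis and ensuring the injection simultaneously respects $\val_i \leq h_{j(i)}$---after which the revenue decomposition closes cleanly.
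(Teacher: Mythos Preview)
Your proposal is correct and follows essentially the same strategy as the paper: split the optimal revenue into the part coming from the VCG winners~$W$ and the part coming from the complement, bound the first part by the lookahead revenue via the optimality of the conditional reserve, and bound the second part by the welfare of the losers and then by the VCG thresholds $\sum_{j\in W} h_j$ using a bijective basis-exchange (Brualdi) argument. The only cosmetic difference is that the paper replaces your set $A\setminus W$ by the single set $W'=\argmax\{v(S):S\in\feasset,\ S\cap W=\emptyset\}$ before applying the exchange lemma, which avoids having to talk about an arbitrary optimal allocation~$A$; your ``technical obstacle'' of $A$ not being full rank is handled exactly as you suggest, by extending to a basis before invoking Brualdi.
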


It is interesting to note that the lookahead auction (in both the
single-item case and in our generalization for the matroid case) is a
special case of \emph{VCG-L auctions}: A VCG-L auction \citep{DRY10, Li13} is an auction that first selects as tentative winners the
bidders who would be winners in a VCG auction, and then offers a
take-it-or-leave-it price to each of these tentative winners, where
the price is the higher of the VCG price and a preset reserve
price. In the lookahead auction, the reserve price is determined
optimally based on the conditional value distribution of the tentative
winner, conditioned on others' revealed values and on being the
tentative winner; We call this the {\em conditional monopoly
  reserve}. In fact, the lookahead auction or VCG-L with conditional
monopoly reserves is the optimal among all VCG-L auctions.

\subsection{General Interdependent Settings}
\label{sec:intro-gen}

General interdependent settings are far less understood than private value
settings.  Indeed, even the more basic problem of maximizing social
welfare is not straightforward: It is no longer possible to design
dominant-strategy auctions, since an agent's value depends on all the
signals, and so if, say, agent~$i$ reports rubbish, agent $j$ might
win at a price above his value if he reports truthfully.  Hence, the
standard approach is to try maximizing welfare at ex post equilibria.
Unfortunately, maximizing social welfare in an ex post equilibrium is
also provably impossible unless the valuation functions $v_i(\bs)$
satisfy a so-called {\em single-crossing condition}.  This condition
means, in essence, that the influence of the highest bidder's signal
on her own value is at least as high as its influence on the other
high bidders' values.\footnote{ This implies that, given signals
  $\bs_{-i}$, if agent $i$ has the highest value when $s_i = t^*$,
  then agent $i$ continues to have the highest value for $s_i >
  t^*$.}  When the single-crossing condition holds, there is a
generalization of VCG auctions that maximizes efficiency at ex post
equilibrium for matroid
settings~\cite{CM85,CM88, Krishna10, Ausubel99, Dasgupta00, Li13}:
the mechanism asks agents to reveal their signals, computes values
based on the reported signals, and then runs a VCG-type auction on these values.

Returning to the goal of approximate revenue maximization, a generalization of
VCG-L with conditional monopoly reserves was proposed by~\citet{Li13}.  This auction,
which we will call \emph{Generalized VCG-L with conditional monopoly
  reserves} or \GVCGL\ for short, is the following: 
\begin{enumerate}
\item Ask agents to report their signals; compute agents' values.
\item Choose the social welfare-maximizing set $W$, and determine the prices
they would be charged in the generalized VCG auction.
\item Offer each agent~$i$ in $W$ a take-it-or-leave-it price which is the maximum of the price from
step (2), and the optimal price conditioned on $i$ being in $W$ and conditioned on the signals of all the other agents.
\end{enumerate} 
This auction was shown by~\citet{Li13} to obtain an expected revenue
which approximates the optimal social welfare (and hence optimal
revenue as well) for matroid settings under the assumption that the
conditional distributions satisfy the monotone hazard rate (MHR)
condition.  \citet{Li13}'s analysis relies on a property peculiar to
MHR distributions, namely that the optimal revenue from a single agent
is close to the agent's expected value.

Unfortunately, \GVCGL\ does not give any constant approximation to the
optimal revenue if the conditional distributions are not MHR, even for
single item settings. Consider, for instance, a setting with two
agents and one item where the first agent's value  is $v_1(s_1, s_2) = s_1$  and the second agent's value is $v_2(s_1, s_2) = s_1 - \epsilon$. In this case, \GVCGL\ always serves the
first agent, obtaining the same expected revenue as in a single agent auction
with this agent alone, because the signal~$s_2$ contains no information and
$v_2$ is completely determined by~$s_1$ (incentive compability prevents
the use of $v_2$ in extracting revenue from the first agent).  
The optimal auction, on the other hand always serves the second
agent obtaining her entire value (because $s_1$ completely determines
$v_2$). The gap between the revenues of these two auctions can be
arbitrarily large if the distribution of $s_1$ is non-MHR. We note, in
particular, that a revenue-maximizing auction must at times sell to an
agent whose value is not the largest. Accordingly, we add a stage of
random admission at the start of the auction, ensuring that the
auction precludes the highest-value agent from winning with constant probability,
while still using her signal to help with setting the prices for the
other agents.  This enables us to overcome the shortcoming of the
\GVCGL\ auction and to recover revenue from all cases with constant
probability.

\subsection{Results in Independent Private Value Settings}

\citet{R01}'s analysis of the lookahead auction uses first principles
without appealing to \citeauthor{M81}'s virtual values. Taking the
same approach allows us to develop new insight into the design of
simple approximately-optimal auctions even in {\em independent private
  value} settings. In \autoref{app0} we rederive several known
simple vs.\@ optimal results \citep[e.g.,][]{HR09, ADMW13} as quick consequences
of the lookahead auction.  We single out one step
of this in \autoref{app1}, which may be of independent interest.  
This concerns the VCG-E or \emph{VCG with eager reserves} auction, which
differs from the VCG-L auction only in that it removes all bidders
bidding below their reserve prices {\em before} picking the tentative
winners.  We show that, when the reserve prices are at least the
monopoly reserves, VCG-E is always preferable than VCG-L, in terms of
both social welfare and revenue (Theorem~\ref{thm:eager-vs-lazy}).


\subsection{Other Related Work}
\label{sec:related}

We briefly give further details on some of the related work.  

\paragraph{The standard auctions.}
The revenue in Bayes-Nash equilibrium of the standard auctions (English, first
price and second price) has been analyzed assuming a strong form of positive
correlation known as ``affiliation'' between the signals \citep{MW82}. One of the most interesting results here is the ``linkage principle'' which shows that in the symmetric, affiliated setting, the auctioneer benefits by enabling the maximum amount of information to be revealed. A consequence is that open auctions such as the English auction generally lead to higher prices than closed auctions such as sealed-bid first price auctions.

\paragraph{Characterization of revenue in ex post equilibrium.}

As mentioned earlier, a series of papers, including \citep{Segal03,CE06,CE07,
Vohra11, Li13, RT13, Lopomo00} \citep[see][for a literature review]{RT13}, develop characterization results for ex post equilibrium that show 
that the expected revenue in ex post equilibrium is equal to the expected ``conditional virtual surplus''.

\paragraph{Optimal auctions in ex post equilibrium.}
The characterization results just mentioned are used to derive
Myerson-like optimal auctions in some settings. The most broadly
applicable results here are due to \citet{RT13} who use their
characterization to derive optimal auctions (a) for correlated values
in matroid settings, with assumptions of regularity and affiliation,
and (b) for the general interdependent case when bidders are
symmetric, have affiliated signals, and satisfy a number of other
conditions~\citep{Lopomo00}.  They also derive the first
approximately-optimal prior-independent results under the Lopomo
assumptions for matroid feasibility constraints. An auction is {\em
  prior independent} if it does not require knowing the distributions
of agents' values \citep{DRY10}.


Several of these papers \citep[e.g.,][]{Lopomo00, Neeman03, CE07, RT13} focus specifically on the English auction and derive among other things that the English auction with a suitable reserve price is optimal among all ex post IC and ex post IR auctions  for symmetric settings with correlated values that satisfy regularity and affiliation.

\paragraph{Bayes-Nash equilibrium versus ex post equilibrium.}
We refer the reader to \citet{RT13} for an excellent discussion of the tradeoffs between BIC mechanism design and ex post IC mechanism design.

\comment{
\subsection{More on Tim and Inbal results: \textbf{TO BE REMOVED}}

\begin{definition}
The \textbf {conditional virtual value} of bidder $i$ is
$$\phi_i(s_i | \bsminusi) = v_i (\bs) - \frac{1-F_i(s_i | \bsminusi)}{f_i(s_i | \bsminusi)} \cdot \frac{d}{ds_i}v_i(\bs).$$
\end{definition}

\begin{definition} (\textbf{Myerson Mechanism})
\begin{itemize}
\item Elicit signals. 
\item Choose allocation to maximize conditional virtual surplus; 
\item Charge each winner $i$ a payment $p_i(\bs) = v_i (s_i^*, \bsminusi)$, where $s_i^*$ is the threshold signal such that given the other signals, he wins.
\end{itemize}
\end{definition}
\subsubsection{Their results}

\begin{itemize}
\item  Expected revenue of an ex post IC and ex post IR mechanism is equal to its expected conditional virtue surplus:
$$E_{\bs}\left(\sum_i p_i(\bs)\right) = E_{\bs}\left(\sum_i x_i(\bs) \phi_i(s_i | \bsminusi)\right)$$
\item Conditions A: 
\begin{itemize}
\item correlated values  (i.e. $v_i (\bs) = s_i$); 
\item matroid settings, 
\item $F_i(s_i | \bsminusi)$ regular for all $i$ and $\bs$;
\item joint signal distribution $\bF$ satisfies affiliation.
\end{itemize}

Conditions A imply that Myerson mechanism is optimal among ex post IC and ex post IR mechanisms.

\item Conditions B 
\begin{itemize}
\item symmetric interdependent (i.e., $v_i (s_i, \bsminusi)$ has the same form for all $i$);
\item matroid setting;
\item joint signal distribution $\bF$ satisfies affiliation;
\item $F_i(s_i | \bsminusi)$ MHR for all $i$ and $\bs$;
\item Bidders with higher signals have higher values: 
$s_i > s_j$ implies that $v_i(\bs) > v_j(\bs)$.
\item $s_i > s_j$ implies that
$$\frac{d}{ds_i} v_i(\bs) \le 0\quad \text{ and } \quad \frac{d}{ds_j} v_i(\bs) \ge 0.$$
\end{itemize}
Conditions B imply that Myerson mechanism is optimal (among ex post IC and ex post IR mechanisms).
\end{itemize}

}

\section{Preliminaries}
\label{sec:prelim}

\paragraph{Single Dimensional Environments.}

In a single dimensional auction environment, an auctioneer is offering a service (or goods) to $n$ bidders, but with certain constraints on which subsets of bidders can be simultaneously served.  Formally, if we use $[n]$ to denote the set of bidders, then there is a set $\feasset \subseteq 2^{[n]}$, such that the auctioneer can serve a subset $S$ of bidders simultaneously if and only if $S$ is in~$\feasset$.  For example, in a single item auction, $\feasset$ consists of all singleton subsets of~$[n]$ and the empty set.  We say the environment is \emph{downward closed} if $T \in \feasset$ implies $S \in \feasset$, $\forall S \subseteq T$.

A feasibility constraint $\feasset$ is called a {\em matroid constraint} if it is downward closed and for all $A,B\in\feasset$ with $|A|>|B|$, there exists $e\in A$ such that $B\cup\{e\}\in\feasset$. Matroid settings encompass important types of markets, e.g.\@ digital goods (when all subsets are feasible, i.e., $\feasset = 2^{[n]}$), $k$ unit auctions (when $\feasset$ contains all subsets of size at most~$k$, i.e., is the $k$-uniform matroid), and unit-demand bipartite matching markets (when $\feasset$ is a transversal matroid).  


\paragraph{Signals, valuations and distributions.}

Each bidder~$i$ has a private one-dimensional signal~$s_i$ representing the
information available to him.  The signal $s_i$ is a realization of a random variable $S_i$. We will assume that  the signals $\bS=(S_1, \ldots, S_n)$ are drawn from a  known \emph{correlated} joint distribution denoted by $\dist$. We denote the vector  $(s_1, \ldots, s_{i - 1}, s'_i, s_{i + 1}, \ldots, s_n)$ by
$(s'_i, \bs_{-i})$.

An agent's valuation $v_i(\bs)$ is a function of all the signals. (In the special case of correlated or independent private values, $v_i$ is equal to the signal $s_i$ and in that case, we do not refer to signals, but rather to the valuations $v_i$ themselves.) These valuation functions are assumed to be common knowledge, and satisfy the following conditions:

\begin{itemize}
\item[(A.1)] For all $i$, $v_i(\bs)$ is increasing in each coordinate and strictly increasing in $s_i$.
\item[(A.2)] The valuations satisfy the {\em single-crossing condition}:
For each $i\ne j$, $\bs_{-i}$, $s_i$, and $s'_i$, with $s'_i>s_i$,
$v_i(s_i, \bs_{-i})\ge v_j(s_i, \bs_{-i})$ implies $v_i(s'_i, \bs_{-i})> v_j(s'_i, \bs_{-i})$. In other words, as soon as agent $i$'s value crosses agent $j$'s value, then increasing $i$'s signal continues to keep $v_i$ larger than $v_j$.
\item[(A.3)] The responsiveness of an agent's value to someone else's signal decreases as the agent's own signal increases. Formally,  
for all $i\ne j$ and for all $\bs$,
  $\frac{\partial v_i(\bs)}{\partial s_j}$ is a non-increasing
  function of $s_i$. 
\end{itemize}

Assumptions (A.1) and (A.2) are standard in interdependent value settings. Assumption (A.3) is new. One broad class of value functions for which this assumption holds is {\em additively separable} valuations, i.e. $v_i(\bs) = \sum_j g_{ij}(s_j)$ where each $g_{ij}(\cdot)$ is a non-decreasing functions. More generally, if values are concave functions of additively separable functions of signals, then assumption (A.3) holds.


\paragraph{Auctions and Incentive Compatibility.}

An \emph{auction} takes as input a signal from each bidder and produces as output a pair of vector functions $(\alloc, \payment)$ defined on tuples of signals.  At the signals
$(s_1, \ldots, s_n)$, the \emph{allocation} function, $\alloc_i(s_1, \ldots, s_n)$,  gives the probability with which
bidder~$i$ receives the service, and $\payment_i(s_1, \ldots, s_n)$ , the \emph{payment} function, denotes the expected payment
bidder~$i$ makes to the auctioneer. (The randomization here is in the mechanism.)

Bidder~$i$'s \emph{utility} from participating in the auction $(\alloc(\cdot), \payment(\cdot))$, when the true signals are $\bs$, he reports $r_i$, and the other bidders report $\br_{-i}$, is $v_i (\bs)\alloc_i(\br) - \payment_i(\br)$. 

Since bidders' signals are private, they may misreport their signals to gain advantage.  In order to incentivize
truthtelling, we study auctions that are \emph{incentive compatible}. 
Three different equilibrium notions, presented in order of decreasing strength, will come up in this paper:  
\begin{itemize}
\item An auction is called \emph{dominant strategy
incentive compatible} (DSIC) if, for all  $i$, true signals $(s_1, \ldots, s_n)$, reported signals $(r_1, \ldots, r_n)$ and $s_i'$:
\begin{align*}
v_i (\bs)\alloc_i(s_i, \br_{-i}) - \payment_i(s_i, \br_{-i}) \geq v_i(\bs) \alloc_i(s'_i, \br_{-i}) - \payment_i(s'_i,\br_{-i} ).
\end{align*}

\item An auction is called \emph{ex post
incentive compatible} (ex~post IC, for short)\footnote{All ex~post IC auctions
that we consider will, in fact, satisfy the ex~post IC condition even after any internal randomization in the auction.}
 if, for all $i$, true signals
$(s_1, \ldots, s_n)$ and $s_i'$
\begin{align*}
v_i (\bs)\alloc_i(\bs) - \payment_i(\bs) \geq v_i(\bs) \alloc_i(s_i', \bs_{-i}) - \payment_i(s_i', \bs_{-i}).
\end{align*}

\item An auction is called \emph{Bayesian incentive compatible} (BIC) if, for all $i, s_i$ and $s'_i$,
$$\Ex{v_i (s_i,\bSminusi)\alloc_i(s_i, \bSminusi) - \payment_i(s_i, \bSminusi)} \geq \Ex{v_i (s_i,\bSminusi)\alloc_i(s'_i, \bSminusi) - \payment_i(s'_i, \bSminusi)}.$$

\end{itemize}
In this paper we will focus only on incentive compatible auctions, and so we will use the terms \emph{bids} and
\emph{signals}, or, in the correlated case, \emph{valuations}, interchangeably.

We will also focus exclusively on auctions that satisfy  \emph{ex post
individual rationality}, that is each agent's final utility is  always nonnegative.

 \paragraph{Social Welfare and Revenue.}

The \emph{social welfare} for serving a set $T$ of bidders with signals $\bs$ is $\sum_{i \in T} v_i(\bs)$.  The expected revenue
of an auction is $\Ex{\sum_i \payment_i(\bS)}$.



\paragraph{VCG and Generalized VCG auctions with lazy reserve prices.}

In a VCG auction \citep{Vic61, Clarke71, Groves73},  the winning set of bidders
is the feasible set with the largest  sum of valuations. Each of them pays the
critical valuation below which he would drop from the winning set. We will be
interested in the generalization of VCG that applies in interdependent matroid
settings that satisfy the single-crossing condition \citep{CM85, CM88,
Krishna10, Ausubel99, Dasgupta00, Li13}.

 Formally, in generalized VCG,
the set $\winset$ of winners is 
$$\winset:=\argmax_{T \in \feasset} \sum_{i \in T} v_i(\bs).$$ To define
payments, let$$s^*_i =   \inf_{s_i} \{i \in \argmax_{T \in \feasset} \sum_{j \in T} v_j(s_i, \bsminusi)\}.$$
The  \emph{VCG payment} or \emph{VCG threshold} for bidder~$i$ is then
 $$\vcgprice_i :=v_i (s^*_i, \bsminusi).$$
This
payment $\vcgprice_i$ is  defined solely by the signals of other
bidders, the form of the valuation functions, and the feasibility system~$\feasset$, and therefore is well defined for all bidders, including those who lose
the auction.  For each bidder, the VCG threshold is the value above which he will win the auction, and below which he
loses.  

By construction, the VCG auction maximizes social welfare. Moreover, for correlated settings, VCG is dominant strategy incentive compatible, and for interdependent settings, generalized VCG is ex post incentive compatible. (See Lemma~\ref{lem:GVCGexpost}.)

The main object of study in this paper is a generalized VCG-type auction with individual reserve prices.  

\begin{definition}[Generalized VCG Auctions with Lazy Reserve Prices (GVCG-L)] 
In the generalized VCG auction with \emph{lazy reserve prices} $(\reserve_1, \ldots, \reserve_n)$, agents are asked to report their signals. Given these signals, the mechanism first chooses a tentative subset $\winset$ of winners as in
the GVCG auction, i.e., $\winset =\argmax_{T \in \feasset} \sum_{i \in T} v_i(\bs)$.  Then each bidder $i \in \winset$ is
presented a take-it-or-leave-it price set to be the higher of $\reserve_i$ and $\vcgprice_i$.
\end{definition}

\begin{lemma}~\citep{Li13}
\label{lem:GVCGexpost}
For any interdependent setting with matroid feasibility, the Generalized VCG auction with lazy reserve prices is ex post incentive compatible as long as for each $i$, the reserve price $\reserve_i$ is specified independently of $s_i$. For correlated, downward-closed settings, it is dominant-strategy incentive compatible.
\end{lemma}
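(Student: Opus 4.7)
The plan is to reduce ex post incentive compatibility of GVCG-L to the standard single-parameter characterization: an allocation rule is implementable by a critical-value payment if and only if it is monotone in the agent's own report, with other agents' reports held fixed. The first step is to establish the key monotonicity property of the generalized VCG winner set: under the single-crossing condition (A.2) together with (A.1) and the matroid structure, for every $i$ and every fixed $\bsminusi$, the set of $s_i$ for which $i\in\winset$ is an upward-closed interval $[s_i^*,\infty)$. This is the standard ingredient underlying ex post IC of generalized VCG without reserves \cite{CM85,CM88,Li13}, and I would invoke it rather than reprove it from scratch.

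The next step is to observe that the lazy reserve stage only raises this threshold without making it depend on $s_i$. Fix any $\bsminusi$; by hypothesis $\reserve_i$ depends only on $\bsminusi$, and by construction $\vcgprice_i=v_i(s_i^*,\bsminusi)$ depends only on $\bsminusi$. So the take-it-or-leave-it price $P_i(\bsminusi):=\max(\reserve_i,\vcgprice_i)$ is independent of $s_i$. Since $v_i(\cdot,\bsminusi)$ is strictly increasing in its first coordinate by (A.1), there is a unique $s_i^{**}\ge s_i^*$ with $v_i(s_i^{**},\bsminusi)=P_i(\bsminusi)$, and agent $i$ ends up served if and only if her reported signal $r_i$ satisfies $r_i\ge s_i^{**}$, in which case she pays exactly $P_i(\bsminusi)=v_i(s_i^{**},\bsminusi)$.

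Ex post IC is then immediate. Fix the true signal profile $\bs$ and consider any deviation $r_i$ by agent $i$, with others reporting truthfully. If $s_i\ge s_i^{**}$, truthful reporting yields utility $v_i(\bs)-v_i(s_i^{**},\bsminusi)\ge 0$; any $r_i<s_i^{**}$ yields $0$, and any $r_i\ge s_i^{**}$ yields the same utility as truthtelling. If $s_i<s_i^{**}$, truthful reporting yields $0$, while any winning deviation gives $v_i(\bs)-v_i(s_i^{**},\bsminusi)<0$ by strict monotonicity of $v_i$ in $s_i$. So truthful reporting is a best response whenever others report truthfully, which is exactly ex post IC. For the correlated downward-closed case, $v_i(\bs)=s_i$, so nothing in the argument above actually uses properties of other agents' valuations or the matroid exchange property: the VCG allocation is monotone in $s_i$ for any fixed $\bsminusi$ in every downward-closed environment, and the same threshold analysis applies \emph{for arbitrary} (not merely truthful) reports $\brminusi$, upgrading the conclusion to dominant-strategy IC.

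The main obstacle is the first step: carefully justifying threshold-type monotonicity of $\winset$ in $s_i$ in a matroid setting with interdependent values. In a general downward-closed environment this monotonicity can fail; the matroid axioms combined with single-crossing are precisely what allow a standard exchange argument to show that once $i$ enters the welfare-maximizing basis as $s_i$ grows, she remains in it. Since this is known from the generalized-VCG literature, I would cite it and focus the write-up on the reserve-price modification, where the only delicate point is verifying that $\max(\reserve_i,\vcgprice_i)$ is independent of $s_i$, which is where the assumption ``$\reserve_i$ does not depend on $s_i$'' is used.
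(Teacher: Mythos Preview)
Your proposal is correct and follows essentially the same approach as the paper's proof sketch: both reduce ex post IC to showing that membership in $\winset$ is upward-closed in $s_i$ (the matroid/single-crossing monotonicity step), and then observe that lazy reserves independent of $s_i$ only raise the threshold. The paper actually sketches the monotonicity argument via the greedy characterization of matroid optima plus single-crossing, whereas you defer to the literature and instead spell out the reserve-price threshold analysis in more detail; these are complementary emphases within the same argument.
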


For completeness, we provide a brief proof sketch for this lemma.

\begin{proofsk}
  It suffices to prove that GVCG without reserve prices is ex post IC, in other
words, for all agents $i$, $i$ is in the winning set $W$ if and only if his
value exceeds his threshold $\vcgprice_i(\bs_{-i})$. Recall that for a matroid
constraint, the winning set $W$ can be found by ordering agents by decreasing
value and greedily selecting a maximal feasible set in this order. Also, in a
matroid, the sizes of all maximal feasible subsets of a given set are equal. It
follows then that whether or not an agent $i$ belongs to the set $W$ depends
only on the set of agents preceding $i$ in the ordering by decreasing value, and
not on the relative ordering of those agents. Furthermore, if an agent in the
winning set $W$ unilaterally raises his signal, the single-crossing property
implies that his rank in the ordering improves. Putting these two observations
together we may conclude that the agent belongs to $W$ as long as his signal
exceeds his threshold $s^*_i$.
\end{proofsk}

\section{Correlated private values}
\label{sec:corr}
For correlated private value single-item auctions, \citet{R01}
proposed the lookahead auction which he showed $2$-approximates the
optimal revenue.  In fact, the lookahead auction is simply VCG-L with
conditional monopoly reserves, that is, where the reserve price for
the highest bidder is set optimally based on the conditional distribution of
the agent's value given others' values and the fact that the agent
has the maximum value. We now give a natural extension of this
result to matroid settings: we show that VCG-L with conditional
monopoly reserves continues to give a 2-approximation to expected
revenue in these settings.



\begin{theorem}{\em(Restatement of Theorem~\ref{thm:lookahead-matroid-approx-1}.)}
\label{thm:lookahead-matroid-approx}
The VCG-L auction with conditional monopoly reserves obtains at least
half of the optimal revenue under a matroid feasibility constraint
when agents have correlated private values.
\end{theorem}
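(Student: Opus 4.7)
\begin{proofsk}
The plan is to extend Ronen's argument for the single-item lookahead auction by decomposing OPT's revenue according to the VCG winning set. For each realization $\mathbf{v}$, let $W(\mathbf{v})$ denote the VCG winning set (the maximum-weight base of the matroid) and $L(\mathbf{v}) := [n]\setminus W(\mathbf{v})$. Writing $\pi^*_i$ for OPT's payment function and splitting according to $W$ and $L$,
\[
\Ex{\sum_i \pi^*_i(\mathbf{V})} \;=\; \Ex{\sum_{i \in W(\mathbf{V})} \pi^*_i(\mathbf{V})} \;+\; \Ex{\sum_{i \in L(\mathbf{V})} \pi^*_i(\mathbf{V})} \;=:\; \mathrm{OPT}_W + \mathrm{OPT}_L,
\]
I will show that $\mathrm{OPT}_W \le \mathrm{LA}$ and $\mathrm{OPT}_L \le \mathrm{LA}$, where $\mathrm{LA}$ denotes the lookahead's expected revenue.

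For $\mathrm{OPT}_W$, I will condition on $\mathbf{v}_{-i}$ and apply a single-agent Myerson-style truncation. Since OPT is ex post IC/IR, the allocation $x^*_i(\cdot,\mathbf{v}_{-i})$ is monotone and obeys Myerson's payment identity. Consider the truncated rule $\tilde x(v) := x^*_i(v,\mathbf{v}_{-i})\,\mathbf{1}[v \ge \vcgprice_i(\mathbf{v}_{-i})]$ together with its Myerson-implied payment $\tilde\pi$; this is a valid IC/IR single-agent mechanism, and a direct computation gives $\pi^*_i(v)\,\mathbf{1}[v \ge \vcgprice_i] \le \tilde\pi(v)$ pointwise (the difference is a nonnegative constant times $\mathbf{1}[v \ge \vcgprice_i]$). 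Hence OPT's revenue from $i$ on the event $\{i \in W\}$ is at most $\Ex{\tilde\pi(V_i)\given\mathbf{v}_{-i}}$, which in turn (since $\tilde x$ vanishes below $\vcgprice_i$) is dominated by $\max_{p \ge \vcgprice_i} p\,\Pr[V_i \ge p\given \mathbf{v}_{-i}]$. By the definition of the conditional monopoly reserve, this maximum is exactly the lookahead's conditional revenue from $i$. Summing over $i$ and integrating yields $\mathrm{OPT}_W \le \mathrm{LA}$.

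For $\mathrm{OPT}_L$, ex post IR gives $\mathrm{OPT}_L \le \Ex{\sum_{i \in L\cap S^*} v_i}$, where $S^*$ denotes OPT's (possibly randomized) allocation. The main idea is a matroid exchange argument. Extending $S^*$ to a base $\bar S^*$, Brualdi's bijective basis exchange theorem furnishes a bijection $\phi: \bar S^*\setminus W \to W\setminus \bar S^*$ such that $W - \phi(i) + i$ is a base for every $i$. Restricted to $i \in S^*\cap L \subseteq \bar S^*\setminus W$, this makes $i$ a legal replacement for $\phi(i) \in W\setminus S^*$ in the VCG selection, so $v_i \le \vcgprice_{\phi(i)}(\mathbf{v}_{-\phi(i)})$ by definition of the VCG threshold as the largest value of a valid swap-in from $L$. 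Injectivity of $\phi$ then gives
\[
\sum_{i \in L\cap S^*} v_i \;\le\; \sum_{i \in L\cap S^*} \vcgprice_{\phi(i)} \;\le\; \sum_{j \in W} \vcgprice_j.
\]
Taking expectations and observing that plugging $p = \vcgprice_j$ into the conditional monopoly optimization yields $\mathrm{LA} \ge \Ex{\sum_{j \in W(\mathbf{V})} \vcgprice_j(\mathbf{V}_{-j})}$, we obtain $\mathrm{OPT}_L \le \mathrm{LA}$. Combined with the first bound, $\mathrm{OPT} \le 2\,\mathrm{LA}$.

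The main obstacle will be the matroid-exchange step: I need the bijective form of basis exchange producing $W - \phi(i) + i$ as a base (obtainable by inverting the standard Brualdi/Greene bijection applied with $B_1 = W$, $B_2 = \bar S^*$), and I must align this swap with the precise form of the VCG threshold so that $v_i \le \vcgprice_{\phi(i)}$ follows. The single-agent truncation bound for $\mathrm{OPT}_W$, in contrast, is a routine Myerson ironing argument once we condition on $\mathbf{v}_{-i}$.
\end{proofsk}
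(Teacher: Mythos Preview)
Your proposal is correct and follows essentially the same approach as the paper: split $\mathrm{OPT}$ into revenue from the VCG winning set $W$ and from its complement, then bound each piece by $\mathrm{LA}$ using (i) optimality of the conditional pricing on $W$ and (ii) a matroid basis-exchange bijection to charge losers' values against VCG thresholds. The only cosmetic differences are that the paper handles $\mathrm{OPT}_W$ in one line (``VCG-L$^*$ runs the optimal auction on each $i\in W$'') where you spell out the Myerson truncation, and that the paper applies the exchange to the max-value independent set $W'\subseteq[n]\setminus W$ rather than to the actual allocation $S^*$ extended to a base; both variants yield the same inequality $\sum_{i\in L}v_i \le \sum_{j\in W}\vcgprice_j$.
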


To prove \autoref{thm:lookahead-matroid-approx}, we will need the next well-known theorem on matroids.  \citep[See, e.g.][for a proof.]{S03}

\begin{theorem}
\label{thm:matroid-mapping}
Let $B_1$ and $B_2$ be any two independent sets of a matroid~$\matroid$ such that $|B_1| = |B_2|$.  There exists a bijective
mapping $g: B_1 \setminus B_2 \to B_2 \setminus B_1$ such that $\forall e \in B_1 \setminus B_2$, $B_2 \setminus \{e\} \cup
\{g(e)\}$ is independent in~$\matroid$. 
\end{theorem}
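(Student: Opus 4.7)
The plan is to apply Hall's marriage theorem to a bipartite graph whose perfect matchings are exactly the desired exchange bijections. Write $A := B_1 \setminus B_2$ and $C := B_2 \setminus B_1$; the hypothesis $|B_1| = |B_2|$ gives $|A| = |C|$. Construct a bipartite graph $H$ on $A \sqcup C$ by connecting $e \in A$ to $f \in C$ exactly when $(B_2 \setminus \{f\}) \cup \{e\}$ is independent in $\matroid$. A perfect matching in $H$ is precisely a bijection $g$ of the type claimed, so it suffices to verify Hall's condition: $|N_H(S)| \ge |S|$ for every $S \subseteq A$.

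The next step is to describe the single-element neighborhoods $N_H(\{e\})$ via fundamental circuits. For $e \in A$, either $B_2 \cup \{e\}$ is independent---in which case every $f \in C$ is adjacent to $e$ and Hall's condition is trivial for any $S$ containing such an $e$---or $B_2 \cup \{e\}$ contains a unique circuit $C_e \ni e$ with $C_e \setminus \{e\} \subseteq B_2$, and then $f \in N_H(\{e\})$ if and only if $f \in C_e \cap C$. Moreover, $C_e \cap C$ is nonempty: otherwise $C_e \setminus \{e\} \subseteq B_1 \cap B_2$ would force $C_e \subseteq B_1$, contradicting the independence of $B_1$.

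The crux is to derive the Hall condition from these circuits. Fix $S \subseteq A$ and assume (without loss) that every $e \in S$ falls into the second case above. Then for each $e \in S$,
\[
C_e \setminus \{e\} \;\subseteq\; (B_1 \cap B_2) \,\cup\, \bigl(C_e \cap C\bigr) \;\subseteq\; (B_1 \cap B_2) \,\cup\, N_H(S),
\]
so $e$ lies in the matroid closure of $(B_1 \cap B_2) \cup N_H(S)$. Consequently, the independent set $(B_1 \cap B_2) \cup S \subseteq B_1$ is also contained in that closure, and rank monotonicity yields
\[
|B_1 \cap B_2| + |S| \;=\; \mathrm{rank}\bigl((B_1 \cap B_2) \cup S\bigr) \;\le\; \mathrm{rank}\bigl((B_1 \cap B_2) \cup N_H(S)\bigr) \;=\; |B_1 \cap B_2| + |N_H(S)|,
\]
where the final equality uses that $(B_1 \cap B_2) \cup N_H(S) \subseteq B_2$ is independent. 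Subtracting $|B_1 \cap B_2|$ gives $|N_H(S)| \ge |S|$, and Hall's theorem supplies the required bijection $g$.

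The main obstacle is that the usual Brualdi-style augmentation proof of Hall's condition assumes $B_1, B_2$ are bases of the ambient matroid, whereas here they are only equal-sized independent sets. The closure/rank argument above sidesteps this by operating inside $B_2$, where every relevant set is automatically independent, and comparing the two sides of the inequality through a single ``span containment'' statement rather than through an extension procedure.
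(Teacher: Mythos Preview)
The paper does not prove this theorem; it is quoted as a well-known matroid fact with a reference to Schrijver. Your proof via Hall's theorem, fundamental circuits, and a rank/closure comparison is correct and is one of the standard arguments for this exchange bijection.

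One point worth flagging: the statement as printed carries a typo --- since $e \notin B_2$ and $g(e) \in B_2$, the set $B_2 \setminus \{e\} \cup \{g(e)\}$ is literally just $B_2$ and hence trivially independent. The intended conclusion, visible from the application in the proof of \autoref{thm:lookahead-matroid-approx}, is that $B_1 \setminus \{e\} \cup \{g(e)\}$ be independent. Your bipartite graph instead encodes the condition that $(B_2 \setminus \{g(e)\}) \cup \{e\}$ be independent; this is the symmetric form obtained by swapping the roles of $B_1$ and $B_2$ and replacing $g$ with $g^{-1}$, and after that relabeling your argument delivers exactly the version the paper actually uses.
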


\begin{proof}[Proof of \autoref{thm:lookahead-matroid-approx}]

  We use VCG-L$^*$ to denote VCG-L with conditional monopoly
  reserves. Denote by $\larev$ the expected revenue of the VCG-L$^*$
  auction.  The revenue of any optimal auction can be split into two
  parts, $\highrev$ and~$\lowrev$: $\highrev$ is the expected revenue
  from~$\winset$, the set of tentative winners, and $\lowrev$ the
  revenue from the rest of the agents.  Note that $\winset$ here is
  random, determined by the realization of bidders' valuations, but
  $\highrev$ and $\lowrev$ are expected values and not random.  It
  suffices to show that $\larev$ is no less than both $\highrev$
  and~$\lowrev$.

$\larev$ is clearly at least $\highrev$, since the VCG-L$^*$ auction runs the
optimal auction for each agent in~$\winset$, using all information available at that stage.

Let $\losesetprime \subseteq [n] \setminus \winset$ be an independent subset that maximizes the social welfare among bidders not
in~$\winset$.  The expectation of $\sum_{j \in \losesetprime} v_j$ is an upper bound for
$\lowrev$, since the auction cannot charge more than the agents' valuations.  Therefore it suffices to show $\larev \geq \Ex{\sum_{j \in \losesetprime} v_j}$.
Since $|\losesetprime| \leq |\winset|$, we can find a subset $\basecommon \subseteq \winset$ such that $\basecommon \cup \losesetprime$ is independent and $|\basecommon \cup
\losesetprime| = |\winset|$.  By \autoref{thm:matroid-mapping} there exists a bijective mapping $g: \winset \setminus
\basecommon \to \losesetprime$ such that for any bidder~$i$ in $\winset \setminus \basecommon$, $\winset \setminus \{i\} \cup
\{g(i)\}$ is independent.  Therefore, the VCG threshold $\vcgprice_i$ for each $i \in \winset \setminus \basecommon$ is at least
$v_{g(i)}$.  In the second stage of the VCG-L$^*$ auction, if the auctioneer simply
sets VCG payment for each agent in~$\winset$,  a revenue of $\sum_{j \in \losesetprime} v_j$
would have been secured.  By the optimality of the revenue
from~$\winset$ in the lookahead auction, we have $\larev \geq \Ex{\sum_{j \in \losesetprime} v_j} \geq \lowrev$.
\end{proof}

As we point out in \autoref{app0}, it is not hard to see that, when the private valuations are
independently drawn from regular distributions, the reserve prices optimally set
for an agent in~$\winset$ will be either the VCG payment or the
\emph{monopoly reserve}, the optimal price one would set when selling a
single item to this single bidder.  In this case,
\autoref{thm:lookahead-matroid-approx} directly implies that the VCG-L auction
with monopoly reserves $2$-approximates the optimal auction, a result first
shown by \citet{DRY10}.  In \autoref{app0} and \autoref{app1} we also present
similarly quick derivations of several other simple vs.\@ optimal results from
the literature.

The proof for \autoref{thm:lookahead-matroid-approx}, generalized from
\citeauthor{R01}'s proof for the lookahead auction, is direct and simple,
without appealing to special properties of the distribution, nor any
characterizations of the expected revenue (e.g.\@ as virtual surplus).  We use
this as a convenient building block in the design and analysis of our mechanisms
for interdependent value settings, which nonetheless call for new ideas to
overcome the plain lookahead auctions' limitations.

\section{General interdependent values}
\label{sec:general}
We now consider the general interdependent setting where agents'
values depend on their own as well as on others' signals. We consider
a natural generalization of the mechanism presented in the previous
section for the correlated private values setting: generalized VCG-L
with conditional monopoly reserves, or \GVCGL. We saw an example in
the introduction where the \GVCGL\ mechanism does not always obtain a
good approximation to the optimal revenue, even with just a single
item for sale, simple value functions, and ``nice'' signal
distributions.  We get around the shortcomings in \GVCGL\ by running
the mechanism on a random subset of the agents. Importantly, we ensure
that the agent with the highest value is left out of this subset with
constant probability, so that the mechanism can extract revenue from
agents with lower values as well. We will now show that this
modification is sufficient to obtain a constant factor approximation
to optimal revenue.

Before we describe our mechanisms formally, we introduce some notation
and prove a bound on the expected revenue of any ex post IC
mechanism. Fix a vector of reported signals $\bs$ (which since we consider only incentive-compatible mechanisms are the true signals). We will use $v(A)$
to denote the total value of a subset $A$ of agents: $\sum_{i\in A}
v_i(\bs)$. Let $W$ be the maximum value feasible set (e.g., in the
case of a matroid constraint, an optimal base in the matroid). Let
$W'$ be the maximum value feasible set that is disjoint from $W$.

\begin{lemma}
\label{lem:OPT-bound}
  Let $W$ and $W'$ be defined as above. Then the expected revenue of
  any ex post IC mechanism is bounded by
$$\expect{\sum_{i\in W} R_i + v(W')}$$
where the expectation is over the randomness in signals; $R_i$ is the
expected revenue obtained by offering to serve agent $i$ at the
monopoly reserve price for the conditional value distribution of $i$,
conditioned on $\bs_{-i}$ and on $i$ being in $W$.
\end{lemma}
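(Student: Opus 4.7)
The plan is to decompose the expected revenue of any ex post IC, ex post IR mechanism by agent and by whether the agent belongs to $W$:
$$\Ex{\mathrm{Rev}} \;=\; \Ex{\sum_{i\in W} p_i(\bS)} \;+\; \Ex{\sum_{i\notin W} p_i(\bS)},$$
and bound the two terms separately. For the losers' term, ex post IR gives $p_j(\bs)\le v_j(\bs)$ whenever $j$ is served, and the set of agents the mechanism serves is feasible, so its intersection with $[n]\setminus W$ is a feasible set disjoint from $W$. By definition of $W'$ this set has total value at most $v(W')$, so $\Ex{\sum_{i\notin W} p_i(\bS)}\le \Ex{v(W')}$.

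For the winners' term, I would show pointwise in $\bs_{-i}$ that $\Ex{p_i(\bS)\,\mathbf 1[i\in W]\mid \bs_{-i}} \le R_i(\bs_{-i})\cdot \Prx{i\in W\mid \bs_{-i}}$. Fix $\bs_{-i}$. Ex post IC together with single crossing (A.2) implies the standard threshold characterization: there is a critical signal $s^*_i(\bs_{-i})$ such that $i$ is served iff $s_i\ge s^*_i$, and in that case pays exactly $v_i(s^*_i,\bs_{-i})$, a quantity that depends only on $\bs_{-i}$. Again by single crossing, the event $\{i\in W\}$ coincides with an upper interval $\{s_i\ge s^{\mathrm{VCG}}_i\}$ in $s_i$. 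Restricted to this upper interval, the mechanism's behavior toward $i$ is exactly a take-it-or-leave-it offer at the single price $v_i(\max(s^*_i,s^{\mathrm{VCG}}_i),\bs_{-i})$, so its expected revenue conditional on $\bs_{-i}$ and on $i\in W$ is at most the optimal conditional posted-price revenue, which is $R_i(\bs_{-i})$ by definition of the conditional monopoly reserve. Multiplying by $\Prx{i\in W\mid \bs_{-i}}$, taking expectation over $\bs_{-i}$, and summing over $i$ gives $\Ex{\sum_{i\in W} p_i(\bS)}\le \Ex{\sum_{i\in W} R_i(\bS_{-i})}$, which combines with the losers' bound to yield the lemma.

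The main obstacle is the threshold characterization step: one has to verify that it carries over to the interdependent setting under (A.2) and that the agent's payment is precisely $v_i(s^*_i,\bs_{-i})$, with no residual dependence on $s_i$. This is what allows the restriction of the mechanism to $\{i\in W\}$ to be genuinely interpreted as a take-it-or-leave-it offer in the family against which $R_i$ is defined; the rest of the argument is a straightforward matroid/feasibility observation for losers and a routine integration for winners.
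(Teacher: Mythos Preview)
Your approach matches the paper's: split the revenue into contributions from $W$ and from $[n]\setminus W$, bound the latter by $v(W')$ via ex post IR and downward closure, and bound each winner's conditional contribution by the optimal single-agent posted-price revenue $R_i$. One small caveat: your threshold characterization as written covers only deterministic mechanisms; for randomized ex post IC mechanisms the allocation to $i$ given $\bs_{-i}$ is merely monotone in $s_i$, i.e., a \emph{random} posted price, but the bound by $R_i$ still follows since $R_i$ is optimal over all such prices---which is precisely how the paper phrases the argument.
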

\begin{proof}
  The revenue that any ex post IR mechanism can extract from agents
  not in $W$ is at most $v(W')$. Therefore, we focus on the revenue
  that a mechanism can extract from the set $W$. This revenue is
  bounded from above by the sum of revenues of $n$ ex post IC
  mechanisms, the $i$th one of which serves only agent $i$, and serves
  this agent only when this agent belongs to the set $W$. Any
  single-agent ex post IC mechanism is a (random) posted price
  mechanism with a posted price that depends on the reported signals
  of the other agents; therefore, the revenue obtained by such a mechanism
  is at most $R_i$.
\end{proof}

\subsection{Single item setting}
We now describe our mechanism for the single item setting. Consider
the following mechanism that we call randomized \GVCGL:

\vspace{0.1in}
\framebox[0.95\textwidth][c]{
\parbox{0.9\textwidth}
{
\begin{enumerate}
\item Ask each agent to report his signal.
\item Include each agent in a set $\set$ with probability 2/3.
\item Run \GVCGL\ on $\set$; when determining the reserve price for agent~$i$,
condition on signals of all agents except~$i$ (including those not in $\set$).  
\end{enumerate}
}
}

\begin{theorem}
  Consider a single item setting with interdependent values, where the
  value functions satisfy assumptions (A.1)--(A.3). The randomized \GVCGL\
  is ex post IC and achieves a 4.5-approximation to the optimal ex
  post IC mechanism.
\end{theorem}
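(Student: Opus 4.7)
The plan proceeds in two parts: first establishing ex post incentive compatibility, and then the $4.5$-approximation to revenue.

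For ex post IC, the argument is to condition on the realization of the random admission set $\set$. Since $\set$ is drawn independently of all signals, and since for each $i$ the reserve $r_i(\bs_{-i})$ depends only on the other agents' signals, conditional on any realization of $\set$ the mechanism is exactly a GVCG-L auction on the bidders in $\set$ with signal-independent reserves. \autoref{lem:GVCGexpost} then gives ex post IC for each such conditional mechanism, and because the distribution of $\set$ does not depend on any bidder's signal, averaging over $\set$ preserves ex post IC.

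For the approximation, I start from \autoref{lem:OPT-bound}, which for a single item says $\mathrm{OPT}\le\expect{R_{T(\bs)}+v_{T'(\bs)}}$, where $T(\bs)$ and $T'(\bs)$ denote the highest- and second-highest-value bidders at signal profile $\bs$. The strategy is to lower bound $\expect{\REV}$ by summing contributions from two disjoint events on $\set$: Event $A$, of probability $2/3$, that $T(\bs)\in\set$, in which case $T(\bs)$ is the tentative winner in $\set$; and Event $B$, of probability $2/9$, that $T(\bs)\notin\set$ but $T'(\bs)\in\set$, in which case $T'(\bs)$ is the tentative winner in $\set$. The goal is to show $\expect{\REV\cdot\mathbf{1}[A]}\ge(2/3)\,\expect{R_{T(\bs)}}$ and $\expect{\REV\cdot\mathbf{1}[B]}\ge(2/9)\,\expect{v_{T'(\bs)}}$; combined with $2/3\ge 2/9$, this yields $\expect{\REV}\ge(2/9)\,\expect{R_{T(\bs)}+v_{T'(\bs)}}\ge(2/9)\,\mathrm{OPT}$, i.e., a $4.5$-approximation.

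The bound in Event $A$ is the routine one. I would set $r_i(\bs_{-i})$ to the monopoly posted price for the distribution of $v_i$ conditioned on $\bs_{-i}$ and on $i\in W$, exactly as in \autoref{lem:OPT-bound}. Since $v_{T(\bs)}$ dominates the VCG threshold within $\set$, in Event $A$ the revenue from $T(\bs)$ is at least $r_{T(\bs)}\cdot\mathbf{1}[v_{T(\bs)}\ge r_{T(\bs)}]$; taking expectations over $\bs$ and summing over which bidder happens to be $T(\bs)$ recovers $\expect{R_{T(\bs)}}$ by definition, and multiplying by $\Prx{A}=2/3$ gives the claimed bound.

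Event $B$ is the main technical obstacle, since a naive monopoly-style reserve on $T'(\bs)$ cannot in general extract the full value $v_{T'(\bs)}$ — as the introductory non-MHR example in \autoref{sec:intro-gen} shows, without structural assumptions a single posted price on the second agent can lose arbitrarily much revenue. The key resource in Event $B$ is that $\bs_{-T'(\bs)}$ reveals $s_{T(\bs)}$, which under Assumption (A.3), and in the cleanest form for additively separable valuations, essentially pins down $v_{T'(\bs)}$. Concretely, for additively separable $v_i=\sum_j g_{ij}(s_j)$ one can take $r_{T'(\bs)}(\bs_{-T'(\bs)})=\sum_{j\ne T'(\bs)}g_{T'(\bs),j}(s_j)$, a price always accepted by $T'(\bs)$ by (A.1); more generally, (A.3) (decreasing cross-partials) lets one choose a posted price that captures a constant fraction of $v_{T'(\bs)}$ in expectation. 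This delivers $\expect{\REV\cdot\mathbf{1}[B]}\ge(2/9)\,\expect{v_{T'(\bs)}}$, and combining with the Event $A$ bound completes the proof. The hard step will be turning the intuition about (A.3) into a clean quantitative inequality that yields, up to absolute constants, the full $v_{T'(\bs)}$ in expectation from a single posted price on $T'(\bs)$ depending only on $\bs_{-T'(\bs)}$.
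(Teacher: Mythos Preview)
Your Event~$B$ argument has a genuine gap that cannot be closed along the lines you suggest. You want a posted price on agent~$2$ (the second-highest agent) depending only on $\bs_{-2}$ that recovers $v_2(\bs)$, or at least a constant fraction of it. But even in the additively separable case your own candidate price $\sum_{j\ne 2} g_{2j}(s_j)$ recovers exactly $v_2(\bs)-g_{22}(s_2)$, and nothing in (A.1)--(A.3) bounds $g_{22}(s_2)$ as a fraction of $v_2(\bs)$: take $v_2(\bs)=s_2+\epsilon s_1$ with $\epsilon$ small. Assumption~(A.3) constrains cross-partials, not own-partials, so it gives you no handle on the portion of $v_2$ contributed by agent~$2$'s own signal. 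No single posted price on agent~$2$ can recover that portion, so the inequality $\expect{\REV\cdot\mathbf{1}[B]}\ge(2/9)\,\expect{v_{T'(\bs)}}$ is false in general.

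The paper's route is different in exactly this place. Rather than trying to extract $v_2(\bs)$ from agent~$2$, it decomposes
\[
v_2(\bs)\;=\;v_2(s_1^*,\bs_{-1})\;+\;\bigl[v_2(\bs)-v_2(s_1^*,\bs_{-1})\bigr],
\]
where $s_1^*$ is agent~$1$'s threshold signal. The first piece is at most $v_1(s_1^*,\bs_{-1})$, agent~$1$'s threshold value, which is at most $R_1$. The second piece is where (A.3) actually enters: since $\partial v_2/\partial s_1$ is non-increasing in $s_2$, the increment $v_2(s_1,\bs_{-1})-v_2(s_1^*,\bs_{-1})$ is at most $v_2(s_1,0,\bs_{-12})-v_2(s_1^*,0,\bs_{-12})\le v_2(s_1,0,\bs_{-12})$, and this last quantity is a price determined by $\bs_{-2}$ that agent~$2$ always accepts, hence at most $R_2$. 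Thus $v_2(\bs)\le R_1+R_2$ and $\mathrm{OPT}\le 2R_1+R_2$. On the mechanism side, the paper uses the event $\{1\in\set,\,i\in\set\}$ (probability $4/9$, where $i$ is the agent determining agent~$1$'s threshold $s_1^*$) to recover $R_1$, and your Event~$B$ (probability $2/9$) to recover $R_2$, giving $(2/9)(2R_1+R_2)$. The reason the paper needs both $1$ and $i$ in $\set$, not just $1\in\set$, is that the conditional monopoly reserve in \GVCGL\ conditions on being top in $\set$; only when $i\in\set$ does this conditioning coincide with ``$s_1\ge s_1^*$'' and hence match the $R_1$ of \autoref{lem:OPT-bound}. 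So the missing idea is precisely that part of $v_2$ must be charged back to $R_1$, not to agent~$2$.
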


\begin{proof}
  Let $\bs = (s_1, s_2, \ldots, s_n)$ denote the agents'
  signals. Without loss of generality, suppose that $v_1 (\bs) \ge
  v_2(\bs) \ge v_j(\bs)$ for all $j > 2$.  Define
$$s_1^* =\argmin_s \{v_1(s, \bs_{-1}) \ge v_j(s, \bs_{-1})\quad
\forall j > 1\},$$ In other words, $s_1^*$ is the smallest signal at
which agent 1's value is the highest, given $\bs_{-1}$. Let agent $i$
be the one who determines the threshold $s_1^*$, that is, $v_1(s_1^*,
\bs_{-1})= v_i(s_1^*, \bs_{-1})$. Note that $i$ may or may not be the
agent with the second highest value at $\bs$.

Lemma~\ref{lem:OPT-bound} implies that



$$OPT \le	R_1 + v_2 (s_1, s_{-1})$$
where $R_1$ is the expected revenue from agent 1 conditioned on $s_1
\ge s_1^*$, and $\bs_{-1}$. 

On the other hand, the expected revenue of randomized \GVCGL\ is at
least
$$Pr( 1 \in \set, i\in \set) \cdot R_1 + Pr (2 \in \set, 1 \not \in \set) \cdot R_2 = \frac{2}{9} (2R_1 + R_2)$$
where $R_2$ is the expected revenue from agent 2 conditioned on the fact that $v_2(\bs)$ is
highest among agents in $\set$, and conditioned on
$\bs_{-2}$. Therefore, randomized \GVCGL\ recovers within a small
factor the first component of the optimal revenue.

We will now bound the second term in $OPT$, namely $v_2 (s_1, \bs_{-1})$,
in terms of $R_1$ and $R_2$.
Observe that fixing $\bs_{-1}$, agent 1's value at his threshold
$s_1^*$ is at least as large as any agent's value. Furthermore, $R_1$
extracts at least this threshold value. That is,
$$v_2(s_1^*, \bs_{-1}) \le v_1 (s_1^*, \bs _{-1}) \le R_1.$$
Next, using assumption (A.3) we get that
$$v_2 (s_1, \bs_{-1}) - v_2 (s_1^*, \bs_{-1})\le
v_2(s_1,0,\bs_{-12})-v_2(s_1^*,0,\bs_{-12}) \le
v_2(s_1,0,\bs_{-12})\le R_2$$ 
where the third inequality follows from
noting that agent 2's value conditioned on others' signals is at least
$v_2(s_1,0,\bs_{-12})$. Therefore, we have $v_2 (s_1, s_{-1})\le
R_1+R_2$, and $OPT\le 2R_1+R_2$.
\end{proof}

\subsection{Matroid setting}

Next we consider settings with a matroid feasibility constraint.\footnote{We
note that our approach also applies to knapsack constraints with the
modification that in \GVCGL\ instead of picking the maximum value
feasible set in the first step we greedily pick a maximal feasible set.}
We will run the \GVCGL\ mechanism on a random subset $\set$ of
agents. \GVCGL\ will select the maximum value feasible subset of this
set, call it $T$, and optimally price the agents in that set
conditioned on others' signals. Define $T'=T\setminus W$. Our goal in
selecting $\set$ is two-fold. First, we want to ensure that $T$
contains agents in $W$ with high probability, so that we can recover
(to some approximation) the $R_i$'s for agents $i$ in $W$. Second, we
want to ensure that $T'=T\setminus W$ contains value comparable to
that in $W'$. We will then be able to ``charge'' the value of every agent in
$T'$ to the revenue we recover from this agent plus the revenue we
recover from some agent in $W$. With these goals in mind, we define
the mechanism as follows.



\vspace{0.1in}
\framebox[0.95\textwidth][c]{
\parbox{0.9\textwidth}
{
\begin{enumerate}
\item Ask each agent to report his signal.
\item With probability 1/2 let $\set$ be the set of all agents; and with probability
1/2 include each agent independently in $\set$ with probability 1/2.
\item Run \GVCGL\ on $\set$; when determining the reserve price for an
agent~$i$, condition on the signals of all agents except~$i$ (including those
not in~$\set$).
\end{enumerate}
}
}
\vspace{0.1in}


The following two lemmas capture the properties that we require from
$T$ and $T'$.
\begin{lemma} 
\label{lem:compare} 
$\expect{v(T')}\ge  \frac{1}{8} \expect{v(W')}$.
\end{lemma}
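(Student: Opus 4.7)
The plan is to split the analysis according to the two branches of step~2. In the full-set branch (probability $1/2$, $\set = [n]$), the tentative winner set $T$ equals $W$ and so $v(T \setminus W) = 0$; it therefore suffices to show $\Ex[\set]{v(T \setminus W) \mid \bs} \ge \frac{1}{4} v(W')$ in the random branch for every fixed signal vector $\bs$, and then average. The structural backbone is the observation that $W \cap \set \subseteq T$ always. Indeed, $T$ is produced by the matroid greedy on $\set$; if some $i \in W \cap \set$ were skipped by this greedy, then $i$ would lie in the matroid span of the elements of $\set$ of weight above $v_i$, hence in the span of $\{e \in [n] : v_e > v_i\}$, contradicting the fact that the greedy producing $W$ on $[n]$ adds $i$. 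Consequently $T = (W \cap \set) \sqcup T''$ where $T'' := T \setminus W$, and $T''$ is the maximum-weight subset of $\set \setminus W$ whose union with $W \cap \set$ is independent.

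Next, condition on $A := \set \cap W$. Because $W \cap W' = \emptyset$, each $j \in W'$ is in $\set$ with probability $1/2$ independently of $A$. Let $Y^*(A)$ be a maximum-weight subset of $W'$ such that $A \cup Y^*(A)$ is independent. Then $Y^*(A) \cap \set$ is a valid candidate extension of $A$ inside $\set \setminus W$ (it lies in $\set \setminus W$, and its union with $A$ is a subset of the independent set $A \cup Y^*(A)$), so $v(T'') \ge v(Y^*(A) \cap \set)$; taking conditional expectation yields
\begin{equation*}
\Ex[\set]{v(T'') \mid A} \;\ge\; \frac{1}{2}\, v(Y^*(A)).
\end{equation*}

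It remains to prove $\Ex[A]{v(Y^*(A))} \ge \frac{1}{2} v(W')$. Set $f(A) := v(Y^*(A))$, so $f(\emptyset) = v(W')$ and $f(W) = 0$ (contracting by a base kills all rank). Using the standard identity
\begin{equation*}
f(A) \;=\; \int_0^{\infty} \bigl[\, r\bigl((W' \cap \{e : v_e > t\}) \cup A\bigr) - |A|\,\bigr]\, dt,
\end{equation*}
$f$ is submodular in $A \subseteq W$, being an integral of functions of the form $A \mapsto r(S_t \cup A) - |A|$, each of which is submodular (a submodular matroid-rank term minus a modular term). Applying submodularity to the partition $\{A, W \setminus A\}$ gives $f(A) + f(W \setminus A) \ge f(W) + f(\emptyset) = v(W')$, and averaging (using that $A$ and $W \setminus A$ have the same distribution under uniform sampling) yields $\Ex[A]{f(A)} \ge \frac{1}{2} v(W')$. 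Combining all bounds gives $\Ex[\set]{v(T \setminus W) \mid \bs} \ge \frac{1}{4} v(W')$ in the random branch, and hence $\Ex{v(T')} \ge \frac{1}{8} \Ex{v(W')}$ after taking expectation over signals.

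The main obstacle is verifying the submodularity of $f$; the integral representation reduces it to the classical submodularity of matroid rank, but a direct exchange-style argument is also available. Every other step---the containment $W \cap \set \subseteq T$, the conditional random-subset bound via $Y^*(A) \cap \set$, and the symmetry trick that pairs $A$ with $W \setminus A$---is short once it is set up.
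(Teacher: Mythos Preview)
Your proof is correct and takes a genuinely different route from the paper's.

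The paper argues constructively: conditioning on the random-sampling branch, it processes the elements of $W$ one at a time and, using strong basis exchange, inductively maintains a pair of bases $(A_i,B_i)$ whose first $i$ slots agree. Each step swaps the current $a_i$ with some $b_i\in W'$ precisely when $a_i\notin\set$, so that at the end the resulting base $A_r$ contains $W\cap\set$ and contains each element of $W'$ with probability exactly $1/4$; comparing $A_r\cap\set$ to $T$ gives the bound.

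Your argument replaces this exchange construction with a two-stage averaging. First you condition on $A=\set\cap W$ and lower-bound $v(T'')$ by $v(Y^*(A)\cap\set)$, getting a clean factor $1/2$ from the independent coins on $W'$. Second you show that $f(A)=v(Y^*(A))$ is submodular on $2^W$ via the level-set integral $f(A)=\int_0^\infty\bigl[r(L_t\cup A)-|A|\bigr]\,dt$, and then exploit the symmetry $A\leftrightarrow W\setminus A$ to get $\Ex[A]{f(A)}\ge\tfrac12\bigl(f(\emptyset)+f(W)\bigr)\ge\tfrac12 v(W')$.

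What each buys: the paper's approach is entirely self-contained (only strong basis exchange) and gives the slightly sharper per-element statement $\Pr[j\in A^*\setminus W]=1/4$ for every $j\in W'$. Your approach sidesteps the inductive bookkeeping (distinctness of the $b_i$'s, maintenance of the frozen-prefix invariant) at the cost of invoking the integral formula for max-weight independent sets and submodularity of rank; once those are granted, the remaining inequalities are one-liners. Both land on exactly the same constant $1/4$ in the random branch.
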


\begin{lemma} 
\label{lem:map} For any random $T'$ there exists an injective mapping $f_{T'}$ from the elements of $T'$ to the elements of $W$ 
such that for each $j\in T'$ and $i=f_{T'}(j)$, we have,
$$v_j(s_i^*, \bs_{-i}) \le v_i(s_i^*, \bs_{-i}).$$
\end{lemma}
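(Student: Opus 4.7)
The plan is to construct $f_{T'}$ via matroid base-exchange and then verify the inequality pointwise using a VCG-threshold characterization at the signal $(s_i^*, \bs_{-i})$. First I would extend $T$ to a basis $\tilde T$ of the matroid on $[n]$; since $W$ is also a basis, $|\tilde T| = |W|$. Invoking \autoref{thm:matroid-mapping} on $\tilde T$ and $W$ yields a bijection $g : \tilde T \setminus W \to W \setminus \tilde T$ such that $(W \setminus \{g(e)\}) \cup \{e\} \in \feasset$ for every $e \in \tilde T \setminus W$. Defining $f_{T'} := g|_{T'}$ gives an injection from $T' \subseteq \tilde T \setminus W$ into $W$, settling the injectivity requirement.

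Next, I fix $j \in T'$ and let $i = f_{T'}(j)$. By construction $W - i + j \in \feasset$, and this holds at every signal vector since matroid feasibility is signal-independent. Max-welfare of $W$ at signal $\bs$ gives $v_i(\bs) \geq v_j(\bs)$; monotonicity (A.1) and $s_i^* \leq s_i$ (valid because $i \in W$) then yield $v_j(s_i^*, \bs_{-i}) \leq v_j(\bs)$. The crux is to prove $v_i(s_i^*, \bs_{-i}) \geq v_j(s_i^*, \bs_{-i})$, for which I view $v_i(s_i^*, \bs_{-i})$ as the VCG threshold for $i$. At the signal $(s_i^*, \bs_{-i})$, agent $i$ lies on the boundary of the welfare-maximizing set $W(s_i^*, \bs_{-i})$, so applying the matroid max-welfare property at this signal gives $v_i(s_i^*, \bs_{-i}) \geq v_k(s_i^*, \bs_{-i})$ for every $k \notin W(s_i^*, \bs_{-i})$ with $W(s_i^*, \bs_{-i}) - i + k \in \feasset$. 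Single-crossing (A.2) together with (A.1) ensures that as $s_i$ decreases from $s_i$ to $s_i^*$ agent $j$ does not enter the welfare-maximizing set and the matroid swap relation transfers, so $j$ is a valid swap-in candidate at $(s_i^*, \bs_{-i})$, giving the desired inequality.

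The hard part will be justifying this last step. In interdependent settings, varying $s_i$ perturbs the values of the other agents through interdependence and can reshuffle their rankings, so a priori the welfare-maximizing set $W(s_i^*, \bs_{-i})$ can differ from $W$ in its non-$i$ coordinates; in particular $j$ might enter that set, or $W(s_i^*, \bs_{-i}) - i + j$ might fail to be a valid swap. The argument must use (A.1) to control how $v_j$ shifts and (A.2) to control how the relative ordering between $i$ and its potential competitors evolves, carrying the matroid swap relation from $W$ to $W(s_i^*, \bs_{-i})$. If for a particular $(j,i)$ pair produced by the matroid-exchange bijection this direct transfer breaks, the fallback is to establish Hall's condition on the bipartite graph whose edges record valid $(j, i)$ pairs -- i.e.\ those satisfying $v_j(s_i^*, \bs_{-i}) \leq v_i(s_i^*, \bs_{-i})$ -- and derive the injection via Hall's theorem, using matroid circuit/cocircuit arguments to verify Hall's criterion from the fact that $T$ and $W$ are both independent and $W$ is welfare-maximizing at $\bs$.
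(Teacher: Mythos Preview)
Your primary route---fixing the matroid-exchange bijection $g$ first and then proving the value inequality for that particular pairing---does not work, and the paper in fact uses exactly your ``fallback'' (Hall's theorem), which is where all the content lies.

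The gap is in the sentence ``Single-crossing (A.2) together with (A.1) ensures that as $s_i$ decreases from $s_i$ to $s_i^*$ agent $j$ does not enter the welfare-maximizing set and the matroid swap relation transfers.'' Single-crossing only governs how $v_i$ compares to $v_k$ as \emph{$i$'s own} signal varies; it says nothing about how $v_j$ compares to $v_\ell$ for two agents $j,\ell \neq i$ when $s_i$ moves. So the welfare-maximizing base at $(s_i^*,\bs_{-i})$ can differ from $W$ in coordinates other than~$i$, $j$ can enter it, and even if $j$ stays out, $W(s_i^*,\bs_{-i}) - i + j$ need not be feasible. Concretely, in a rank-$2$ uniform matroid with $W=\{1,2\}$ and $T'=\{3,4\}$, the exchange theorem lets you take $g(3)=1$, $g(4)=2$; but it may well be that agent $4$, not $3$, is the one whose value meets $v_1$ at $s_1^*$, while $v_3(s_1^*,\bs_{-1}) > v_1(s_1^*,\bs_{-1})$. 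The correct injection is then $3\mapsto 2,\ 4\mapsto 1$, which base exchange also permits but which your construction did not select. In short, many exchange bijections exist, and only some of them satisfy the value inequality; you have to \emph{find} one, not fix one and hope.

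The paper's proof builds the bipartite graph with an edge $(i,j)$ whenever $v_j(s_i^*,\bs_{-i}) \le v_i(s_i^*,\bs_{-i})$ and verifies Hall's condition. Two ingredients you have not supplied are the heart of the argument: (i) single-crossing is used in the \emph{opposite} direction from what you attempted---for $k\in W$ ranked above~$i$ at $\bs$, it shows $v_k(s_i^*,\bs_{-i}) \ge v_i(s_i^*,\bs_{-i})$, so all higher-ranked $W$-agents lie in the set $Y_i$ of agents above $i$ at the threshold; (ii) criticality of $s_i^*$ gives $\mathrm{rank}(Y_i) < \mathrm{rank}(Y_i \cup \{i\})$, and then submodularity of the matroid rank function is applied iteratively over a violating set $S\subseteq W$ to force $\mathrm{rank}(S \cup B) = |S| + |B|$, where $B$ is the common set of non-neighbors; this bounds $|B|$ and yields Hall's condition. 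Your proposal gestures at ``matroid circuit/cocircuit arguments'' but does not identify either of these steps.
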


Before proving the lemmas, we give a proof of our main theorem:

\begin{theorem}
  Consider an interdependent values setting with a matroid feasibility
  constraint where the value functions satisfy assumptions
  (A.1)--(A.3). The randomized \GVCGL\ is ex post IC and achieves a
  18-approximation to the optimal ex post IC mechanism.
\end{theorem}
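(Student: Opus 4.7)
My plan is to split $OPT$ via Lemma~\ref{lem:OPT-bound}, convert the $v(W')$ term into something involving $T'$ by combining Lemmas~\ref{lem:compare} and~\ref{lem:map}, and then recover each resulting piece from a different branch of the random admission step, in exact parallel to the single-item argument. Ex post IC is the easy direction: $\set$ is drawn before any signals are reported, so conditional on $\set$ the mechanism is just \GVCGL\ on $\set$, whose reserve for each $i$ depends only on $\bs_{-i}$. Lemma~\ref{lem:GVCGexpost} then applies branch by branch, and averaging preserves ex post IC.

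For the revenue side, let $Rev$ denote the expected revenue of the randomized \GVCGL. On the event $\set=[n]$ (probability $1/2$) the tentative winners are exactly $W$, and each $i\in W$ is offered at least the monopoly reserve conditional on $\bs_{-i}$ and $i\in W$, so the revenue from this branch is at least $\Ex{\sum_{i\in W}R_i}$. On the event that $\set$ is a $1/2$-subsample (also probability $1/2$), every $j\in T$ is a tentative winner and the monopoly reserve chosen for~$j$ dominates the specific feasible price $v_j(0,\bs_{-j})$, which is always accepted by monotonicity~(A.1). Therefore
\[
Rev\ \ge\ \tfrac{1}{2}\Ex{\sum_{i\in W} R_i}+\tfrac{1}{2}\Ex{\sum_{j\in T'} v_j(0,\bs_{-j})}.
\]

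The heart of the proof is to bound $OPT$ by the same two quantities. Lemmas~\ref{lem:OPT-bound} and~\ref{lem:compare} give $OPT\le \Ex{\sum_{i\in W}R_i}+8\,\Ex{v(T')}$, so it suffices to control $v(T')$. For each $j\in T'$, setting $i=f_{T'}(j)$ from Lemma~\ref{lem:map}, I would use the decomposition
\[
v_j(\bs)\ =\ v_j(s_i^*,\bs_{-i}) + \bigl(v_j(\bs)-v_j(s_i^*,\bs_{-i})\bigr).
\]
Lemma~\ref{lem:map} bounds the first term by $v_i(s_i^*,\bs_{-i})$, which is in turn at most $R_i$ because the VCG threshold is always accepted whenever $i\in W$ and is therefore a feasible choice in the monopoly-reserve problem defining $R_i$. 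Assumption~(A.3), used exactly as in the single-item proof (integrating $\partial v_j/\partial s_i$, which is non-increasing in $s_j$, from $s_i^*$ to $s_i$), bounds the second term by $v_j(0,\bs_{-j})$. Summing and invoking injectivity of $f_{T'}$ yields $\Ex{v(T')}\le \Ex{\sum_{i\in W}R_i}+\Ex{\sum_{j\in T'} v_j(0,\bs_{-j})}$, so $OPT\le 9\Ex{\sum_{i\in W}R_i}+8\Ex{\sum_{j\in T'} v_j(0,\bs_{-j})}\le 18\,Rev$. The delicate step, I expect, is threading $f_{T'}$ through the signal-swap decomposition so that each first-term bound is charged to a \emph{distinct} agent of $W$ (injectivity of $f_{T'}$ is exactly what makes this possible), while simultaneously ensuring that the A.3-based slack—which perturbs the signal of some $i\ne j$—can still be bounded by the $\bs_{-j}$-measurable quantity $v_j(0,\bs_{-j})$, so that it is absorbed by a reserve for $j$ on the random-subsample branch.
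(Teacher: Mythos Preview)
Your proposal is correct and follows essentially the same approach as the paper: both split $OPT$ via Lemma~\ref{lem:OPT-bound}, use Lemma~\ref{lem:compare} to pass from $W'$ to $T'$, decompose each $v_j(\bs)$ for $j\in T'$ at the threshold signal $s_i^*$ of $i=f_{T'}(j)$, charge the first piece to $R_i$ via Lemma~\ref{lem:map} and the second to $v_j(0,\bs_{-j})$ via (A.3), and recover the two resulting sums from the $\set=[n]$ and subsample branches respectively. The only cosmetic difference is that the paper names the lower bound on the revenue from $j\in T'$ as $\tilR_j$ (the unconditioned monopoly revenue given $\bs_{-j}$) rather than the specific feasible price $v_j(0,\bs_{-j})$, but the paper's chain $\tilR_j\ge v_j(0,\bs_{-j})\ge v_j(\bs)-v_j(s_i^*,\bs_{-i})$ is exactly your computation.
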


\begin{proof}
  Recall that for an agent $i\in W$, $R_i$ denotes the expected
  revenue we can obtain from this agent by setting an optimal reserve
  price for him conditioned on the other agents' signals and on being
  in the set $W$. With probability $1/2$ our mechanism runs \GVCGL\ on
  all of the agents, and therefore obtains this revenue $R_i$ from
  every agent $i\in W$. With the remaining probability, the mechanism
  obtains some revenue from agents in $W$ and some revenue from agents
  in $T'=T\setminus W$. For agents $j$ not in $W$, we can lower bound
  the revenue that the mechanism can collect from them if they end up
  in $T'$: this is the expected revenue obtained by optimally pricing
  agent $j$ conditioned on $\bs_{-j}$;\footnote{Note that in this lower
    bound we do not condition on $j$ being in set $T'$; the extra
    conditioning can potentially increase the expected revenue, but
    this increase is difficult to estimate because it depends on the
    random set $\set$.} call this quantity $\tilR_j$.

We therefore get that the expected revenue of the mechanism is at
least 
$$\frac{1}{2} \sum_{i\in W} R_i  + \expect{\sum_{j\in T' }\tilR_j }.$$
On the other hand, recall that by Lemma~\ref{lem:OPT-bound} we can bound
the optimal revenue as
$$OPT \le \sum_{i\in W} R_i +v(W').$$
We will now upperbound the second term  in terms of the $R_i$'s and $\tilR_j$'s. 
Fix a random choice of $T'$.
Let $j\in T'$ and $i= f_{T'}(j)$, as defined in Lemma \ref{lem:map}.
Then $$\tilR_{j} \ge v_j(0, \bs_{-j}) \ge v_j(s_i,0, \bs_{-ij}) - v_j(s_i^*,0, \bs_{-ij}) \ge v_{j} (s_i, \bs_{-i}) - v_{j} (s_i^*, \bs_{-i})$$
where the last inequality follows from assumption (A.3).
But by Lemma \ref{lem:map}, for that same $i= f_{T'}(j)$,
$$R_i \ge v_i (s_i^*, \bs _{-i})\ge v_{j}(s_i^*, \bs_{-i}),$$
so
$$\tilR_j + R_i \ge v_j(\bs).$$
Finally, by Lemma \ref{lem:compare},
\begin{eqnarray*}
v(W') & \le & 8 \expect{\sum_{j\in T'} v_j(\bs)}\\
&\le & 8 \expect{\sum_{j\in T'}(\tilR_j + R_{f_{T'}(j)})}.
\end{eqnarray*}
Putting it together:
\begin{eqnarray*}
OPT &\le& \sum_{i\in W} R_i + v(W')\\
&\le & \sum_{i\in W} R_i + 8 \expect{\sum_{j\in T'}(\tilR_j + R_{f_{T'}(j)})}\\
&\le & 9\sum_{i\in W} R_i  +8\expect{\sum_{j\in
    T'}\tilR_j }, 
\end{eqnarray*}
and we get an $18$-approximation.
\end{proof}

It remains to prove the lemmas, for which we will need the following fact whose proof can be found e.g., in
 \citet{S03}.
 
 \begin{lemma}[Strong Basis Exchange]
Let $B$ and $B'$ be two bases of a matroid. Then for all $x\in B\setminus B'$, there is a $y \in B'\setminus B$ such that both $B-x + y$ and $B' - y + x$ are bases.
\end{lemma}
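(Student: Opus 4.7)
\begin{proofsk}
My plan is a symmetric-exchange argument built around two fundamental objects attached to~$x$. The first is $C_x$, the fundamental circuit of $x$ with respect to~$B'$: the unique circuit contained in $B'\cup\{x\}$, which exists because $B'$ is a base and $x\notin B'$. The second is $D_x$, the fundamental cocircuit of $x$ with respect to~$B$: the unique cocircuit contained in $([n]\setminus B)\cup\{x\}$, equivalently described as $D_x=\{x\}\cup\{y\in[n]\setminus B:B-x+y\text{ is a base}\}$.

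First I would record two easy consequences. For any $y\in C_x\setminus\{x\}$, circuit elimination gives that $B'-y+x$ is independent of size $|B'|$, hence a base. For any $y\in D_x\setminus\{x\}$, the description of $D_x$ directly gives $y\notin B$ with $B-x+y$ a base. So it suffices to exhibit a $y\in(C_x\cap D_x)\setminus\{x\}$: such a $y$ automatically lies in $B'\setminus B$ and makes both $B-x+y$ and $B'-y+x$ bases.

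The hard step is therefore to show $C_x\cap D_x$ contains more than just~$x$. Here I would invoke the classical matroid fact that a circuit and a cocircuit cannot intersect in exactly one element. The quick proof is: if $C\cap D=\{e\}$ then $C-e$ is independent (a proper subset of the circuit~$C$) and $C-e\subseteq [n]\setminus D$; since cocircuit complements are closed hyperplanes, this gives $e\in\operatorname{span}(C-e)\subseteq [n]\setminus D$, contradicting $e\in D$. Applying this to $C_x$ and $D_x$, which already share~$x$, forces $|C_x\cap D_x|\ge 2$, and any $y$ in $(C_x\cap D_x)\setminus\{x\}$ completes the plan.
\end{proofsk}
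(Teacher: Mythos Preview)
Your argument is correct and is the standard circuit--cocircuit proof of strong basis exchange: take the fundamental circuit $C_x$ of $x$ in $B'$ and the fundamental cocircuit $D_x$ of $x$ in $B$, observe that any $y\in (C_x\cap D_x)\setminus\{x\}$ does the job, and use the fact that a circuit and a cocircuit never meet in exactly one element to produce such a~$y$. Each step checks out, including your quick justification of the circuit--cocircuit intersection property via closure of the hyperplane $[n]\setminus D$.

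There is nothing in the paper to compare against: the paper does not prove this lemma but simply cites it as a known fact from \citet{S03}. Your sketch is essentially the argument one finds there (or in Oxley), so it would serve perfectly well as a self-contained replacement for the citation.
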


\begin{proofof}{Lemma~\ref{lem:compare}}
In step (2) with probability $1/2$ we construct $\set$ by sampling
each agent. Let us condition on this event.
We will fix a particular order in which to sample elements for $Z$  and in the
process inductively update a particular independent set. It will end up containing all the elements of $W\cap Z$, and each element of $W'$ will be in it, and in $Z$, with probability 1/4.

Assume that $|W'|= |W|$.  (If not, extend $W'$ to a full base using elements of $W$ and adapt the following arguments appropriately.)
Fix an ordering on the elements of $W$, say
$a_1 \ldots, a_r$.  We will also be considering the elements of
$W'$ in a particular order $b_1, \ldots, b_r$ to be determined.

As we sample elements for $Z$, we will inductively update two (ordered) sets of matroid elements $A_i$ and $B_i$ that satisfy the following invariants: 
\begin{itemize}
\item $A_i$ and $B_i$ are both bases.
\item The sampling has been done only for elements $\{a_1, \ldots, a_i\}$ and $\{b_1, \ldots, b_i\}$.
\item The first $i$ elements in $A_i$ and $B_i$ are the same.
\item All elements in $\{a_1, \ldots, a_i\}\cap Z$ are in $A_i$.
\item Each element in $\{b_1, \ldots, b_i\}$ is in $Z\cap A_i$ with probability 1/4.
\end{itemize}
Initially $A_0= W$ and $B_0 = W'$ satisfy the above invariants.
Now suppose that $A_{i-1}$ and $B_{i-1}$ have been constructed and satisfy the above invariants.
Apply the strong basis exchange lemma to find an element  $b_i \in B_{i-1}$ such that $A_{i-1} - a_i + b_i$ and
$B_{i-1} - b_i + a_i$ are both bases. 

Now toss the coin to determine if $a_i \in Z$. If so, $A_i := A_{i-1}$
and $B_i  := B_{i-1} - b_i + a_i$. If $a_i \not\in Z$, then
$A_i := A_{i-1} - a_i + b_i$ and $B_i := B_{i-1}$. Finally toss the coin
to determine if $b_i\in Z$. It is easy to see that all the required invariants are satisfied.

At the end of the process, consider the set $A^*=A_r\cap Z$ (where $r$ is the
rank of the matroid). Note that
all of the agents in $W\cap Z$ belong to $A_r$, so they also belong to
$A^*$. On the other hand, the greedy algorithm for matroids includes
all agents in $W\cap Z$ into $T$. So, $A^*\setminus W$ is a candidate for the
set $T'=T\setminus W$. We now note that $A^*\setminus W$
contains sufficient value: by the invariant above, every element of
$W'$ is in $A_r\cap Z$ with probability $1/4$. Therefore we get:
\[ \expect{v(T')} \ge \expect{v(W'\cap A^*)} = \frac{1}{4} v(W').\]
Removing the conditioning specified earlier yields the lemma.
\end{proofof}

%
\comment{

\begin{proofof}{Lemma~\ref{lem:map}}

 Let the rank of the matroid be $r$. Recall that $|T|\le r$ and
  $T'=T\setminus W$.  For now assume that $T'$ is full rank.
  
 
    We construct the function $f_{T'}$ by  showing that there is 
 a matching between $W$ and $T'$ such that for each $i$ in $W$, its match $j= M(i)\in T'$ has $v_j(s_i^*, \bs_{-i}) \le v_i(s_i^*, \bs_{-i}).$
 
 To this end, construct a bipartite graph $G_{W,T'}$ where there is an edge from $i\in W$ to $j\in T'$ if $v_j(s_i^*, \bs_{-i}) \le v_i(s_i^*, \bs_{-i}).$  
 
 Let $G_i$ denote the neighbors of $i\in W$ in the graph, and $B_i = T' \setminus G_i$.  We will use Hall's theorem to show that there is a perfect matching in the graph.  Specifically, for any  $S\subseteq W$, we will show that $|\Gamma(S) | \ge |S|$
where  $\Gamma(S)$ are the neighbors of $S$.

Without loss of generality
  assume that the elements of $W$ are indexed in order of decreasing
  value at $\bs$ (so their indices are $1, \ldots, r$).
For all $k \le r$, and $j < k$, we have
  $v_j(s_k^*, \bs_{-k})\ge v_k(s_k^*, \bs_{-k})$. This is because
  $v_j(\bs) \ge v_k(\bs)$, and the single-crossing condition ensures
  that agent $k$ cannot be above $j$ at $s_k^*$ and end up at or below
  $j$ at $s_k \geq s^*_k$.
  
  Thus, for each $i$, at $\bs=(s_i^*, \bs_{-i})$, agents in $W_i := \{1,\ldots, i-1\}$ have higher value than $i$ does. 
Also, by definition, agents in $B_i$ have higher value than $i$ does. Since this is the critical threshold for $i$ to join the maximum weight feasible set, it must be that for each $i$,
  $$rank(W_i \cup B_i) < rank (W_i \cup B_i \cup\{i\}).$$

Now, let $S = \{i_1, \ldots, i_k\}$ (ordered in decreasing order of value at $\bs$). Then in particular, the previous inequality
holds for each $i := i_j$.
If we denote by
$S_j$ the first $j-1$ elements of $S$, then $S_j \subseteq W_{i_j}$.
Also, $B= \cap_{i\in S} B_i\subseteq B_{i_j}$.   By submodularity of the rank function, it then follows that
 $$rank (S_j\cup B) < rank (S_j\cup B \cup \{i_j\})$$
 and thus
 \begin{equation}
 \label{eqnR1}
rank (S_j \cup B \cup \{i_j\}) = rank (S_j \cup B)+1.
\end{equation}
It follows by induction on the size of $S$ that
 \begin{equation}
 \label{eqn:RC}
rank(S \cup B) = |S| + |B|.
\end{equation}
To see this, recall that  $S = \{i_1, \ldots, i_k\}$, and 
$S_j$ is the first $j-1$ elements of $S$.
The base case of $j=1$ is trivial (since $B$ is independent, a subset of $T'$).
For the induction step, suppose that $rank(S_j \cup B) = |S_j| + |B|$.
Then use (\ref{eqnR1}) to show that 
$rank(S_{j+1} \cup B) = rank(S_j \cup B \cup \{i_j\}) = rank(S_j \cup B )+1 = |S_{j+1}| + |B|$ to
complete the inductive step.

Finally, by (\ref{eqn:RC}), we have $$rank(S\cup B) = |S| + |B| \le r$$
and so $|B| \le r- |S|$, which implies that $|\Gamma(S)| = |T' | - |B|= r - |B| \ge |S|$.

To handle the case that $T'$ is not full rank, simply augment it to a full basis using elements of $W$, and show that there is a matching between $W\setminus T'$ and $T'$ by following the same argument.

\end{proofof}

}
\begin{proofof}{Lemma~\ref{lem:map}}
 Let the rank of the matroid be $r$. Recall that $|T|\le r$ and
  $T'=T\setminus W$.  For now assume that $T'$ is full rank.
 We construct the function $f_{T'}$ by  showing that there is 
 a matching between $W$ and $T'$ such that for each $i$ in $W$, its match $j= M(i)\in T'$ has $v_j(s_i^*, \bs_{-i}) \le v_i(s_i^*, \bs_{-i}).$
 
 To this end, construct a bipartite graph $G_{W,T'}$ where there is an edge from $i\in W$ to $j\in T'$ if $v_j(s_i^*, \bs_{-i}) \le v_i(s_i^*, \bs_{-i}).$  
 
 Let $G_i$ denote the neighbors of $i\in W$ in the graph, and $B_i = T' \setminus G_i$.  We will use Hall's theorem to show that there is a perfect matching in the graph.  Specifically, for any  $S\subseteq W$, we will show that $|\Gamma(S) | \ge |S|$
where  $\Gamma(S)$ are the neighbors of $S$.

Without loss of generality
  assume that the agents of $W$ are indexed in order of decreasing
  value at $\bs$ (so their indices are $1, \ldots, r$).
For all $k \le r$, and $j < k$, we have
  $v_j(s_k^*, \bs_{-k})\ge v_k(s_k^*, \bs_{-k})$. This is because
  $v_j(\bs) \ge v_k(\bs)$, and the single-crossing condition ensures
  that agent $k$ cannot be above $j$ at $s_k^*$ and end up at or below
  $j$ at $s_k \geq s^*_k$.

For each $i$, let $Y_i$ be the set of agents in $W\cup T'$ which have greater value than $i$ at $(s_i^*, \bs_{-i})$.
Since $s_i^*$ is critical for $i$,
\begin{equation}
\label{Rinc}
rank (Y_i ) < rank (Y_i \cup \{i\}).
\end{equation}
Now, let $S = \{i_1, \ldots, i_k\}$ (ordered in decreasing order of value at $\bs$), let 
$B= \cap_{i\in S} B_i$, and let
$S_j$ be the first $j-1$ elements of $S$.
Then for each element $i_j$ of $S$, by submodularity of the rank function, 
the fact that $S_j \cup B \subseteq Y_{i_j}$ and (\ref{Rinc}), we have
$$rank(S_j \cup B) < rank (S_j \cup B \cup \{i_j\}) \quad\text{ and thus }\quad
rank (S_j \cup B \cup \{i_j\} )= rank (S_j \cup B) + 1.$$
Induction on $j$ then implies that
$rank(S\cup B) = |S| + |B|.$
Since this is at most $r$,  $|B| \le r - |S|$, and so  $|\Gamma(S)| = |T' | - |B|= r - |B| \ge |S|$.

To handle the case that $T'$ is not full rank, simply augment it to a full basis using elements of $W$, and show that there is a matching between $W\setminus T'$ and $T'$ by following the same argument.
\end{proofof}

\comment{

\begin{proofof}{Lemma~\ref{lem:map}}
  Let the rank of the matroid be $r$. Recall that $|T|\le r$ and
  $T'=T\setminus W$. Let $\ell=|T'|$. Without loss of generality
  assume that the elements of $W$ are indexed in order of decreasing
  value at $\bs$ (so their indices are $1, \ldots, r$).

  Also, observe that for all $k \le r$, and $j < k$, we have
  $v_j(s_k^*, \bs_{-k})\ge v_k(s_k^*, \bs_{-k})$. This is because
  $v_j(\bs) \ge v_k(\bs)$, and the single-crossing condition ensures
  that agent $k$ cannot be above~$j$ at $s_k^*$ and end up at or below
  $j$ at $s_k \geq s^*_k$.

  We inductively define the mapping $f$. In particular, we will define
  $f^{-1}(i)$ for $i\in \{r-\ell+1, \cdots, r\}$ so that each element in
  this set gets mapped to a distinct element in $T'$. First, consider
  agent $i= r-\ell +1$.  Let $\tilde T$ be the set of agents in $T'$
  whose values are strictly above that of~$i$ at $(s_i^*, \bs_{-i})$. If $|\tilde T| > \ell-1$,
  then $i$ is not yet in the top $r$, since there are at least $r-\ell
  + \ell = r$ other agents at or above it.  Indeed it does not enter
  the top $r$ until at least the value of $s_i$ at which agent $i$
  catches up with the lowest agent in $\tilde T$ (since agents
  $1, \ldots, i-1$ stay at or above it throughout.)  Thus, $|\tilde T| \le
  \ell -1 $, and there is an element of $T'$ at or below $i$ at
  $s_i^*$. Define this agent to be $f^{-1}(i)$.

 Inductively, suppose that we have defined $f^{-1}(r-\ell + 1), \ldots, f^{-1}(k-1)$ distinct agents in $W'$ so that
 $v_{f^{-1}(i)}(s_i^*, \bs_{-i}) \le v_i(s_i^*, \bs_{-i})$ for $r-\ell + 1\le i \le k-1$.
 
 Let $\tilde T$ be the set of agents in $T'$ whose values
are strictly above that of $k$ at $(s_k^*, \bs_{-k})$. Again, if $|\tilde T| > r - k$, then $k$ is not yet in the top $r$, since there are at least
$k-1 + r-k +1= r$ other agents at or above it.  Again, it does not enter the top
$r$ until at least the value of $s_k$ at which agent $k$ catches up with the
lowest agent in $\tilde T$ (since agents $1, \cdots, k-1$ stay at or above it throughout.)
Thus, $|\tilde T| \le r-k$, and there are at least $\ell + k - r$ elements of $T'$ at or below $k$ at $(s_k^*, \bs_{-k})$, and one of them
is not included in $\{f^{-1}(r-\ell +1), \ldots, f^{-1}(k-1)\}$. Define this agent to be $f^{-1}(k)$.
\end{proofof}
}

\section{Conclusions and discussion}
\label{sec:conclusions}
In the previous sections we showed that \GVCGL\ provides constant
factor approximations to the optimal ex post IC revenue in
interdependent settings under certain assumptions. There are several
directions for further work which we now discuss.

\begin{itemize}
\item {\em Prior independence.}
In independent values settings it has been shown that under certain
assumptions (e.g., regularity of the value distributions) it is
possible to design mechanisms that do not require knowing the value
distributions and yet obtain an approximation to the optimal expected
revenue. Such mechanisms are called prior-independent. One way of
designing a prior-independent mechanism, called the single-sample
approach, is to use a reserve price based mechanism (such as GVCG-L)
and replace the reserve price by an independent sample from the
agent's value distribution. The mechanisms that we design require
knowing the conditional distributions of agents' values in order to
determine an appropriate reserve price. Is it possible to design
prior-independent mechanisms in interdependent settings?

It is easy to see that assuming regularity of the conditional
distribution a single-sample approach works: we can replace the
optimal reserve price for an agent in GVCG-L by a random independent
draw from the agent's value distribution conditioned on others'
reported signals, and still obtain a constant factor approximation to
the optimal expected revenue. However, this approach is
unsatisfying. In independent value settings, if there are several
agents with identically distributed values, we can remove one of these
agents at random from the auction and use his value as the random
reserve. In interdependent value settings, we cannot truely ``remove''
an agent from the auction because her signal affects others'
values\footnote{\citet{RT13} show that under
  the so-called Lopomo assumptions, which includes strong MHR and
  symmetry across agents, one can get a constant factor approximation
  by ``reusing'' the signal of an agent to determine a random reserve
  without dropping the agent from the auction.}.

Another approach to prior-independent mechanism design is to
artificially limit supply \citep{DHKN11, RTY12}. In the interdependent values
context, we might ask, for instance: when agents are symmetric and
conditional value distributions satisfy regularity, does the VCG
mechanism with $k/2$ units for sale approximate the expected revenue
of an optimal $k$-unit auction for $k\ge 2$?

\item {\em Relaxing the assumptions.}  In the absence of the
  single-crossing assumption, the VCG mechanism and its variants are
  no longer ex post incentive compatible. While it appears to be
  challenging to characterize the optimal mechanism in this context,
  approximations may be tractable. Likewise, beyond matroid
  feasibility constraints, VCG and its variants are no longer
  necessarily ex post IC.  Is it possible to approximate the optimal
  mechanism in this case?


\item {\em Ex post IC versus Bayesian IC.}
In independent value settings with single-parameter agents, Myerson
shows that ex post IC mechanisms are as strong as BIC mechanisms in
terms of extracting revenue. In interdependent settings, Cremer and
McLean show that BIC mechanisms can extract the entire social
welfare under mild assumptions, which can in general be much larger
than the optimal ex post IC revenue. However, Cremer and McLean's
mechanism violates ex post individual rationality. Can BIC mechanisms
obtain more revenue than ex post IC mechanisms under an ex post
IR constraint? Can we approximate this revenue?

\end{itemize}


\bibliographystyle{apalike}

\begin{thebibliography}{}

\bibitem[Abraham et~al., 2013]{AABG13}
Abraham, I., Athey, S., Babaioff, M., and Grubb, M. (2013).
\newblock Peaches, lemons, and cookies: designing auction markets with
  dispersed information.
\newblock In {\em ACM Conference on Electronic Commerce}.

\bibitem[Ausubel, 1999]{Ausubel99}
Ausubel, L. (1999).
\newblock A generalized vickrey auction.
\newblock {\em Econometrica}.

\bibitem[Azar et~al., 2013]{ADMW13}
Azar, P., Daskalakis, C., Micali, S., and Weinberg, S.~M. (2013).
\newblock Optimal and efficient parametric auctions.
\newblock In {\em 24th ACM-SIAM Symposium on Discrete Algorithms}.

\bibitem[Babaioff et~al., 2012]{BKL12}
Babaioff, M., Kleinberg, R., and Leme, R.~P. (2012).
\newblock Optimal mechanisms for selling information.
\newblock In {\em ACM Conference on Electronic Commerce}.

\bibitem[Chung and Ely, 2006]{CE06}
Chung, K. and Ely, J. (2006).
\newblock Ex-post incentive compatible mechanism design.
\newblock Unpublished manuscript.

\bibitem[Chung and Ely, 2007]{CE07}
Chung, K.-S. and Ely, J.~C. (2007).
\newblock Foundations of dominant-strategy mechanisms.
\newblock {\em The Review of Economic Studies}, 74(2):447--476.

\bibitem[Clarke, 1971]{Clarke71}
Clarke, E.~H. (1971).
\newblock {Multipart pricing of public goods}.
\newblock {\em Public Choice}, 2:19--33.

\bibitem[Cr\'emer and McLean, 1985]{CM85}
Cr\'emer, J. and McLean, R. (1985).
\newblock Optimal selling strategies under uncertainty for a discriminating
  monopolist when demands are interdependent.
\newblock {\em Econometrica}, 53(2).

\bibitem[Cr\'emer and McLean, 1988]{CM88}
Cr\'emer, J. and McLean, R. (1988).
\newblock Full extraction of the surplus in bayesian and dominant strategy
  auctions.
\newblock {\em Econometrica}, 56(6).

\bibitem[Dasgupta and Maskin, 2000]{Dasgupta00}
Dasgupta, P. and Maskin, E. (2000).
\newblock Efficient auctions.
\newblock {\em The Quarterly Journal of Economics}, 115(2):341--388.

\bibitem[Devanur et~al., 2011]{DHKN11}
Devanur, N., Hartline, J., Karlin, A., and Nguyen, T. (2011).
\newblock Prior-independent multi-parameter mechanism design.
\newblock In {\em Internet and Network Economics}, pages 122--133. Springer.

\bibitem[Dhangwatnotai et~al., 2010]{DRY10}
Dhangwatnotai, P., Roughgarden, T., and Yan, Q. (2010).
\newblock Revenue maximization with a single sample.
\newblock In {\em ACM Conference on Electronic Commerce}, pages 129--138.

\bibitem[Dobzinski et~al., 2011]{DFK11}
Dobzinski, S., Fu, H., and Kleinberg, R.~D. (2011).
\newblock Optimal auctions with correlated bidders are easy.
\newblock In {\em STOC}, pages 129--138.

\bibitem[Groves, 1973]{Groves73}
Groves, T. (1973).
\newblock {Incentives in Teams}.
\newblock {\em Econometrica}, 41:617--631.

\bibitem[Hartline and Roughgarden, 2009]{HR09}
Hartline, J.~D. and Roughgarden, T. (2009).
\newblock Simple versus optimal mechanisms.
\newblock In {\em ACM Conference on Electronic Commerce}, pages 225--234.

\bibitem[Klemperer, 1998]{Klemp98}
Klemperer, P. (1998).
\newblock Auctions with almost common values: The 'wallet game' and its
  applications.
\newblock {\em European Economic Review}, 42(3-5):757--769.

\bibitem[Krishna, 2010]{Krishna10}
Krishna, V. (2010).
\newblock {\em Auction Theory}.
\newblock Academic Press.

\bibitem[Li, 2013]{Li13}
Li, Y. (2013).
\newblock Approximation in mechanism design with interdependent values.
\newblock In {\em ACM Conference on Electronic Commerce}, pages 675--676.

\bibitem[Lopomo, 2000]{Lopomo00}
Lopomo, G. (2000).
\newblock Optimality and robustness of the english auction.
\newblock {\em Games and Economic Behavior}, 36:219--240.

\bibitem[McAfee et~al., 1989]{MMR89}
McAfee, R.~P., McMillan, J., and Reny, P.~J. (1989).
\newblock Extracting the surplus in the common-value auction.
\newblock {\em Econometrica}, pages 1451--1459.

\bibitem[McAfee and Reny, 1992]{MR92}
McAfee, R.~P. and Reny, P.~J. (1992).
\newblock Correlated information and mechanism design.
\newblock {\em Econometrica}, 60(2):395--421.

\bibitem[Milgrom, 2004]{Milgrom04}
Milgrom, P. (2004).
\newblock {\em Putting Auction Theory to Work}.
\newblock Cambridge University Press.

\bibitem[Milgrom and Weber, 1982]{MW82}
Milgrom, P. and Weber, R.~J. (1982).
\newblock A theory of auctions and competitive bidding.
\newblock {\em Econometrica}, 50:1089--1122.

\bibitem[Myerson, 1981]{M81}
Myerson, R. (1981).
\newblock Optimal auction design.
\newblock {\em Mathematics of Operations Research}, 6(1):pp. 58--73.

\bibitem[Neeman, 2003]{Neeman03}
Neeman, Z. (2003).
\newblock The effectiveness of english auctions.
\newblock {\em Games and Economic Behavior}, 43(2):214--238.

\bibitem[Papadimitiou and Pierrakos, 2011]{PP11}
Papadimitiou, C. and Pierrakos, G. (2011).
\newblock On optimal single-item auctions.
\newblock In {\em STOC}, pages 119--128.

\bibitem[Rahman, 2014]{Rahman14}
Rahman, D. (2014).
\newblock Surplus extraction on arbitrary type spaces.
\newblock Forthcoming in Theoretical Economics.

\bibitem[Ronen, 2001]{R01}
Ronen, A. (2001).
\newblock On approximating optimal auctions.
\newblock In {\em ACM Conference on Electronic Commerce}, pages 11--17.

\bibitem[Roughgarden and Talgam-Cohen, 2013]{RT13}
Roughgarden, T. and Talgam-Cohen, I. (2013).
\newblock Optimal and near-optimal mechanism design with interdependent values.
\newblock In {\em ACM Conference on Electronic Commerce}, pages 767--784.

\bibitem[Roughgarden et~al., 2012]{RTY12}
Roughgarden, T., Talgam-Cohen, I., and Yan, Q. (2012).
\newblock Supply-limiting mechanisms.
\newblock In {\em ACM Conference on Electronic Commerce}, pages 844--861.

\bibitem[Schrijver, 2003]{S03}
Schrijver, A. (2003).
\newblock {\em {Combinatorial Optimization : Polyhedra and Efficiency
  (Algorithms and Combinatorics)}}.
\newblock Springer.

\bibitem[Segal, 2003]{Segal03}
Segal, I. (2003).
\newblock Optimal pricing mechanisms with unknown demand.
\newblock {\em American Economic Review}, 93.

\bibitem[Syrgkanis et~al., 2013]{Syrgkanis13}
Syrgkanis, V., Kempe, D., and Tardos, E. (2013).
\newblock Information asymmetries in common-value auctions with discrete
  signals.
\newblock In {\em SSRN}.

\bibitem[Vickrey, 1961]{Vic61}
Vickrey, W. (1961).
\newblock {Counterspeculation, Auctions and Competitive Sealed Tenders}.
\newblock {\em Journal of Finance}, pages 8--37.

\bibitem[Vohra, 2011]{Vohra11}
Vohra, R. (2011).
\newblock {\em Mechanism Design: A Linear Programming Approach}.
\newblock Cambridge University Press.

\bibitem[Wilson, 1969]{Wilson69}
Wilson, R. (1969).
\newblock Competitive bidding with disparate information.
\newblock {\em Management Sciences}, 15.

\end{thebibliography}

\appendix

\section{Simple vs.\@ Optimal Results via Lookahead Auctions}
\label{app0}
In this appendix we give quick derivations for several results in the literature
showing that mechanisms simple in form approximates the optimal revenue.

Recall that a \emph{monopoly reserve price} for a bidder (or rather, her value
distribution) is the optimal take-it-or-leave-it price that an auctioneer should
set when selling the item to the bidder alone.  For bidder~$i$, we denote her
monopoly reserve price as $\monres_i$.  

\begin{theorem}
\label{thm:lookahead-vcg}
In general single-dimensional auction settings, when bidders' private valuations are drawn independently from regular
distributions, the VCG-L auction with conditional monopoly reserve
prices is identical to the VCG-L auction with (unconditional) monopoly reserve prices.  
\end{theorem}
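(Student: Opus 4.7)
The plan is to fix an agent $i$ selected as a tentative winner and show that, under independence and regularity, the final take-it-or-leave-it price charged to $i$ is the same quantity, $\max(\vcgprice_i, \monres_i)$, in both mechanisms. Since both mechanisms use the same VCG allocation in their first stage (the tentative winner set depends only on reported values, not on reserves), matching the final prices pointwise will establish that the two auctions are identical in allocation and payment.

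First I would observe that, by independence, the distribution of $v_i$ conditioned on $\bv_{-i}$ is just $F_i$. The only conditioning that matters in defining the conditional monopoly reserve is the event ``$i \in \winset$''; in a single-dimensional downward-closed VCG selection, this event is exactly $\{v_i \ge \vcgprice_i\}$. Hence the conditional monopoly reserve for $i$ is the price $p \ge \vcgprice_i$ that maximizes
\[
p \cdot \Pr\bigl[v_i \ge p \,\big|\, v_i \ge \vcgprice_i\bigr] \;=\; \frac{p(1-F_i(p))}{1-F_i(\vcgprice_i)},
\]
which has the same argmax over $p \ge \vcgprice_i$ as the unconditional revenue curve $p(1-F_i(p))$ restricted to $[\vcgprice_i, \infty)$.

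Next I would use regularity. The derivative of $p(1-F_i(p))$ with respect to $p$ equals $-f_i(p)\phi_i(p)$, where $\phi_i$ is the Myerson virtual value; by regularity $\phi_i$ is non-decreasing, so the revenue curve is unimodal with unique maximizer $\monres_i$. This gives two cases for the conditional monopoly reserve $p^*$:
\begin{itemize}
\item If $\vcgprice_i \le \monres_i$, then $\monres_i$ is feasible and still maximizes the restricted revenue, so $p^* = \monres_i$.
\item If $\vcgprice_i > \monres_i$, then $\phi_i \ge 0$ on $[\vcgprice_i,\infty)$, so the restricted revenue is non-increasing there and $p^* = \vcgprice_i$.
\end{itemize}
In either case, $\max(\vcgprice_i, p^*) = \max(\vcgprice_i, \monres_i)$, which is precisely the take-it-or-leave-it price charged by VCG-L with unconditional monopoly reserve $\monres_i$. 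Since this holds for each tentative winner, the two auctions produce identical allocations and payments on every signal profile.

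The only delicate step is the case analysis that pins down the conditional monopoly reserve, and it relies squarely on both hypotheses: independence is what reduces the conditional distribution to a truncation of $F_i$, and regularity is what makes the truncated revenue curve unimodal so that its maximizer collapses to $\max(\vcgprice_i, \monres_i)$. Without regularity one would have to contend with multiple local maxima of $p(1-F_i(p))$ above $\vcgprice_i$, and the identification could fail.
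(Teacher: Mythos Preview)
Your proposal is correct and follows essentially the same approach as the paper: reduce the conditional distribution to a truncation of $F_i$ at $\vcgprice_i$, observe that the conditional revenue curve is a constant scaling of the unconditional one on $[\vcgprice_i,\infty)$, and use regularity to conclude the optimal price on this interval is $\max(\vcgprice_i,\monres_i)$. The only cosmetic difference is that you justify unimodality via the virtual-value derivative $-f_i(p)\phi_i(p)$, whereas the paper simply invokes that regularity makes $p(1-F_i(p))$ non-increasing beyond $\monres_i$.
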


\begin{proof}
In the lookahead auction, once a bidder~$i$ is in~$\winset$, the auctioneer runs
an optimal auction for the bidder conditioning on the other bidders' values and
the fact $i \in \winset$ given these values.  In the independent value settings,
this simply amounts to setting an optimal take-it-or-leave-it price for the
bidder conditioning on $\vali \geq \vcgprice_i$.  We show that this optimal
price is equal to $\max \{\vcgprice_i, \monres_i$.  With this, a quick
comparison would confirm that the lookahead auction is identical with VCG-L with
monopoly reserves.


The distribution of $v_i$, conditioning on
$v_i \geq \vcgprice_i$ is just $\dist_i$ truncated at $\vcgprice$, i.e.,
$\conddist_i(v) = \tfrac{\dist_i(v) - \dist(\vcgprice_i)}{1 -
\dist_i(\vcgprice_i)}$, for any $v \geq \vcgprice_i$, and $\conddist_i(v) = 0$
for any $v < \vcgprice_i$.  Setting any price below $\vcgprice_i$ would be obviously suboptimal, and the revenue of setting a price of $\payment \geq
\vcgprice_i$ is 
\begin{align*}
\payment (1 - \conddist_i(\payment)) = \payment \cdot \left( 1 -
\frac{\dist_i(\payment) - \dist_i(\vcgprice_i)}{1 -
\dist_i(\vcgprice_i)} \right) = \payment (1 - \dist_i(\payment)) \cdot
\frac{1}{1 - \dist_i(\vcgprice_i)}.
\end{align*}
In words, for prices above $\vcgprice_i$, the conditional expected revenue is simply scaled up by a constant factor of $\tfrac{1}{1 -
\dist_i(\vcgprice_i)}$ from the unconditioned distribution. Therefore, if
$\vcgprice_i < \monres_i$, the optimal price to set for the conditional distribution is
still $\monres_i$; and if $\vcgprice_i \geq \monres_i$, by regularity, the expected revenue monotonically decreases as the
price rises above $\vcgprice_i$; therefore the optimal price to set is
$\vcgprice_i$ itself.
\end{proof}

Given \autoref{thm:lookahead-matroid-approx}, this immediately implies the following result by \citet{DRY10}.

\begin{corollary}[\citealp{DRY10}]
\label{cor:vcg-el-matroid}
In matroid settings where bidders' valuations are drawn independently from
regular distributions, the VCG-L auction with monopoly reserve prices gives at least half of the optimal revenue.
\end{corollary}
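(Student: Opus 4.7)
The plan is to derive the corollary as an immediate composition of the two results that precede it in the excerpt: \autoref{thm:lookahead-matroid-approx}, which applies to arbitrarily correlated private values, and \autoref{thm:lookahead-vcg}, which says that in the independent regular case, the choice of ``conditional'' monopoly reserves collapses to ordinary monopoly reserves. There is essentially nothing combinatorial left to do; the work has already been done upstream, and the role of the proof is just to chain these two facts together.

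Concretely, I would proceed in three short steps. First, I would observe that an independent product distribution is a special case of a correlated joint distribution, so \autoref{thm:lookahead-matroid-approx} applies verbatim and gives that the VCG-L auction with conditional monopoly reserves obtains at least $\tfrac12\cdot\mathrm{OPT}$ under any matroid feasibility constraint. Second, I would invoke \autoref{thm:lookahead-vcg} to conclude that, for each bidder $i$, the conditional monopoly reserve used in the first auction is exactly $\max\{\vcgprice_i,\monres_i\}$ — i.e., that the take-it-or-leave-it price offered by VCG-L with conditional monopoly reserves coincides, on every realization of the signals, with the take-it-or-leave-it price offered by VCG-L with (unconditional) monopoly reserves. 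Third, I would note that since the two mechanisms select the same tentative winning set $\winset$ (both use the welfare-maximizing base) and then charge identical prices to each $i\in\winset$, they are literally the same mechanism, and hence have the same expected revenue.

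Combining these observations yields $\REV(\text{VCG-L with monopoly reserves}) = \REV(\text{VCG-L with conditional monopoly reserves}) \geq \tfrac12\cdot\mathrm{OPT}$, which is precisely the claim of the corollary. I do not anticipate any genuine obstacle: both regularity and independence are used only inside \autoref{thm:lookahead-vcg} to guarantee that the optimal posted price on the truncated distribution is $\max\{\vcgprice_i,\monres_i\}$, and the matroid structure is used only inside \autoref{thm:lookahead-matroid-approx}. The only thing one should be mildly careful about is to confirm that ``identical mechanism'' really does mean identical expected revenue in a setting where ties in the VCG allocation might need a fixed tie-breaking rule; under any consistent tie-breaking, the equivalence goes through and the bound transfers directly.
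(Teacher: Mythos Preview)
Your proposal is correct and matches the paper's approach exactly: the paper presents this corollary as an immediate consequence of \autoref{thm:lookahead-matroid-approx} together with \autoref{thm:lookahead-vcg}, with no additional argument beyond chaining these two results. Your three-step outline simply spells out what the paper means by ``immediately implies.''
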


As another illustration of the usefulness between the lookahead auction and the
reserve-based VCG auctions, we use the analysis of the lookahead auction to give
a short re-derivation for a result by \citet{ADMW13}, a major building block in that work.

Let $\rev_i(\payment)$ denote the expected revenue from bidder~$i$ by setting a price of~$\payment$.

\begin{theorem}[Theorem 3.1 in \citealp{ADMW13}]
\label{thm:vcgl-single}
For each bidder~$i$, let $\reserve_i$ be a price drawn randomly from a certain distribution, such that in the
single-bidder setting, $\Ex[\reserve_i]{\rev_i(\reserve_i)} \geq \alpha \rev_i(\monres_i)$.  Then in all downward closed
settings where bidders' valuations are drawn indepdendently from regular distributions, VCG-L with random
reserve prices $(\reserve_1, \ldots, \reserve_n)$ gets at least $\alpha$-fraction of the revenue of VCG-L with monopoly
reserve prices (VCG-LM). 
\end{theorem}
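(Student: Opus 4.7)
The plan is to reduce the global inequality $\REV(\text{VCG-L}_{\br}) \geq \alpha \cdot \REV(\text{VCG-LM})$ to a per-bidder, pointwise statement, in the spirit of the lookahead analysis used in \autoref{thm:lookahead-matroid-approx}. Fixing the other bidders' valuations $\bs_{-i}$ determines bidder $i$'s VCG threshold $\vcgprice_i$, and in a downward closed environment $i$ belongs to the tentative winning set $\winset$ iff $\vali \geq \vcgprice_i$. Because values are independent across bidders, the expected revenue that VCG-L with a (possibly random) reserve $\reserve_i$ collects from $i$, conditioned on $\bs_{-i}$, equals $\Ex[\reserve_i]{\rev_i(\max(\vcgprice_i, \reserve_i))}$, where $\rev_i(p) = p(1-\dist_i(p))$. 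Summing over bidders and taking expectation over $\bs_{-i}$, it therefore suffices to prove that, for every $i$ and every realization of $t := \vcgprice_i$,
\[
\Ex[\reserve_i]{\rev_i(\max(t,\reserve_i))} \;\geq\; \alpha\, \rev_i(\max(t,\monres_i)).
\]

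The heart of the argument is a pointwise ``price-shift'' lemma: for any regular $\dist_i$ and any threshold $t \geq 0$,
\[
\rev_i(\max(t,r)) \;\geq\; \gamma_t \cdot \rev_i(r) \quad \text{for every } r \geq 0,
\]
where $\gamma_t := \rev_i(\max(t,\monres_i))/\rev_i(\monres_i) \in (0, 1]$. I would prove this by case analysis on the signs of $t - \monres_i$ and of $r - t$. Regularity enters through the fact that $\rev_i$ is non-decreasing on $[0,\monres_i]$ and non-increasing on $[\monres_i,\infty)$, which follows from $\rev_i'(p) = -\dens_i(p)\varphi_i(p)$ together with monotonicity of the virtual value $\varphi_i$. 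In the sub-case $r \leq t \leq \monres_i$ we have $\gamma_t = 1$ and $\rev_i(\max(t,r)) = \rev_i(t) \geq \rev_i(r)$ by monotonicity on $[0,\monres_i]$. In the sub-case $r \leq t$ with $t > \monres_i$, the required inequality $\rev_i(t) \geq (\rev_i(t)/\rev_i(\monres_i))\rev_i(r)$ reduces to $\rev_i(\monres_i) \geq \rev_i(r)$, which is just the definition of the monopoly reserve. The remaining sub-cases with $r > t$ are immediate because $\gamma_t \leq 1$.

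With the lemma in hand, linearity of expectation and the single-bidder hypothesis combine to give
\[
\Ex[\reserve_i]{\rev_i(\max(t,\reserve_i))} \;\geq\; \gamma_t \Ex[\reserve_i]{\rev_i(\reserve_i)} \;\geq\; \gamma_t\, \alpha\, \rev_i(\monres_i) \;=\; \alpha\, \rev_i(\max(t,\monres_i)),
\]
which is exactly the per-bidder bound needed. Summing over $i$ and integrating over $\bs_{-i}$ (using independence of $\reserve_i$ from the valuations) completes the proof.

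The main obstacle is calibrating the constant $\gamma_t$ correctly. A tempting naive attempt is to try to show $\rev_i(\max(t,r)) \geq \rev_i(r)$ pointwise, which would let one conclude directly from $\Ex{\rev_i(\reserve_i)} \geq \alpha\,\rev_i(\monres_i)$; but this fails whenever $r$ is close to $\monres_i$ and $t$ is large, because shifting the price past $\monres_i$ to $t$ destroys revenue. The choice $\gamma_t = \rev_i(\max(t,\monres_i))/\rev_i(\monres_i)$ captures exactly the loss incurred in that transition, and this loss is absorbed into the weaker target $\alpha\,\rev_i(\max(t,\monres_i))$ on the right-hand side rather than the full $\alpha\,\rev_i(\monres_i)$.
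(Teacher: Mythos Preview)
Your proposal is correct and is essentially the same argument as the paper's. Your pointwise ``price-shift'' inequality $\rev_i(\max(t,r)) \ge \gamma_t\,\rev_i(r)$ with $\gamma_t = \rev_i(\max(t,\monres_i))/\rev_i(\monres_i)$ is exactly the paper's inequality $\ratiocond \ge \ratio$ rewritten (divide both sides by $\rev_i(\max(t,\monres_i))$), and your three sub-cases match the paper's case split on $\reserve_i \gtrless \vcgprice_i$ and $\vcgprice_i \gtrless \monres_i$ one-for-one.
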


\begin{proof}
Fixing a tentative winner~$i \in \winset$, and condition on the VCG price $\vcgprice_i$.  For a reserve price $\reserve_i$, let $\ratio$ denote the ratio $\rev_i(\reserve_i) /
\rev_i(\monres_i)$, then we have $\Ex{\ratio} \geq \alpha$.  Let $\ratiocond$ denote the ratio between the expected
VCG-L revenue from using reserve price~$\reserve_i$ and that from VCG-LM, conditioning on $v_i \geq \vcgprice_i$.  
It suffices to show $\Ex{\ratiocond} \geq \alpha$.  We will show $\ratiocond \geq \ratio$ pointwise.   Since, as we have
shown in the proof of \autoref{thm:lookahead-spa}, the revenue from all prices above $\vcgprice_i$ is scaled up by a
factor of $1/ (1 - \dist_i(\vcgprice_i))$ when conditioning on $v_i > \vcgprice_i$,  for $\reserve_i \geq \vcgprice_i$ we have
\begin{align*}
\ratiocond = \frac{\rev_i(\reserve_i) / (1 - \dist_i(\vcgprice_i)}{\rev_i(\max \{\vcgprice_i, \monres_i\}) / (1 - \dist_i(\vcgprice_i)} \geq \frac{\rev_i(\reserve_i)}{\rev_i(\monres_i)} = \ratio.
\end{align*}
  For $\reserve_i < \vcgprice_i$, if $\vcgprice_i \geq
\monres_i$, then both auctions will use $\vcgprice_i$, and $\ratiocond = 1 \geq \ratio$; if $\vcgprice_i < \monres_i$,
we have 
\begin{align*}
\ratiocond = \frac{\rev_i(\vcgprice_i) / (1 - \dist_i(\vcgprice_i))}{\rev_i(\monres_i) / (1 - \dist_i(\vcgprice_i))}
= \frac{\rev_i(\vcgprice_i)}{\rev_i(\monres_i)} \geq \frac{\rev_i(\reserve_i)}{\rev_i(\monres_i)} = \ratio.
\end{align*}
The last inequality comes from regularity, i.e., $\reserve_i < \vcgprice_i \leq \monres_i$ implies
$\rev_i(\reserve_i) \leq \rev_i(\vcgprice_i)$.  This completes the proof.
\end{proof}

\section{VCG-L vs.\@ VCG-E}
\label{app1}
In this appendix we compare the revenue of the VCG-E and VCG-L auctions with the
same set of reserves in for independent settings with regular distributions.
Although we end the appendix with \autoref{cor:vcg-e-matroid}, a known result
from \citet{HL09}, the general comparison we make here concerns VCG-L and VCG-E
auctions with reserves that are not necessarily the monopoly reserves; in fact,
even for the monopoly reserve prices, the revenu dominance of VCG-E that we show
here is not known in the literature.  Given the practicality and popularity of
such reserve-based auctions, we believe this comparison to be of independent
interest.


Recall that the VCG-E is the auction that first removes all bidders bidding
below their reserve prices and then run VCG-L on the remaining bidders.  It is
not too hard to see that adding the first stage improves the social welfare, but
the the revenue analysis is nontrivial.  We need to analyze two counteractive
aspects.  On the one hand, VCG-E removes bidders that are
not going to afford their take-it-or-leave-it price later in the auction, and
helps free up resources to be sold to the other bidders; on the other hand, the
existence of more bidders always creates more competition, and makes the VCG
payment for other bidders higher and therefore helps create revenue.
\autoref{thm:eager-vs-lazy} shows that, when the reserve prices are monopoly reserves, the first force is the dominant one.

The analysis is not trivial, and we first present the case of single-item
auctions as a warm-up.  Throughout this appendix we denote by $\monres_i$ the monopoly reserve price for a bidder~$i$, i.e., the optimal posted price one would set when selling a single
item to this bidder.

\begin{theorem}
\label{thm:eager-beats-lazy}
In a single item auction where bidders' valuations are drawn independently from
regular distributions, the second price
auction with eager monopoly reserve prices (VCG-EM) generates weakly more revenue than the second price auction with
lazy monopoly reserve prices (VCG-LM).
\end{theorem}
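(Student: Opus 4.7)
The plan is to reduce the comparison to Myerson's virtual-surplus identity and then carry out a pointwise (realization-by-realization) comparison of the two allocation rules. Both VCG-LM and VCG-EM are DSIC, ex post IR, and have allocations that are monotone in each bidder's own value. Myerson's theorem for independent, regular, single-parameter settings therefore gives, for each mechanism,
\[
\Ex{\REV} \;=\; \Ex{\sum_i \phi_i(v_i)\,x_i(\bv)},
\]
where $\phi_i$ is bidder $i$'s virtual value and $x_i$ the allocation indicator. Since each $\dist_i$ is regular, $\phi_i$ is non-decreasing and $\phi_i(\monres_i)=0$, so $v_i \ge \monres_i$ if and only if $\phi_i(v_i)\ge 0$.

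With this translation in hand, I would fix a bid vector $\bv$ and let $i^\star = \argmax_i v_i$. VCG-LM sells to $i^\star$ precisely when $v_{i^\star}\ge \monres_{i^\star}$ and otherwise makes no sale; VCG-EM first deletes every bidder $j$ with $v_j<\monres_j$ and then awards the item to the survivor of highest value (if any). A short case split then yields pointwise virtual-surplus dominance. If $v_{i^\star} \ge \monres_{i^\star}$, bidder $i^\star$ survives the eager culling and is still the highest-value survivor, so both mechanisms serve $i^\star$ and realize the same virtual surplus $\phi_{i^\star}(v_{i^\star})\ge 0$. If $v_{i^\star} < \monres_{i^\star}$, VCG-LM contributes zero virtual surplus, whereas VCG-EM contributes either zero (when no bidder survives) or some $\phi_j(v_j)\ge 0$ for the highest surviving bidder~$j$; either way, VCG-EM's virtual surplus weakly exceeds VCG-LM's.

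Taking expectations gives that the expected revenue of VCG-EM is at least that of VCG-LM, which is the claim. The one step that I expect to be the main place to be careful is the invocation of Myerson's identity: I would want to verify explicitly that both mechanisms are monotone in each coordinate and that each winner is charged exactly the critical threshold value implied by that monotonicity, so that expected revenue really does equal expected virtual surplus. Once that is granted, the remainder of the argument is a direct case split driven by regularity and the characterization $\phi_i(\monres_i)=0$; no tightness calculation is needed because the comparison is pointwise.
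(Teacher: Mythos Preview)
Your proof is correct, but it proceeds along a genuinely different route from the paper's. The paper deliberately avoids Myerson's virtual-value machinery: it fixes a bidder~$i$, conditions on the other bids, and compares the expected revenue from~$i$ in the two auctions by comparing the two take-it-or-leave-it prices $i$ faces. The lazy price is $\max\{\monres_i,\highlazy\}$ and the eager price is $\max\{\monres_i,\higheager\}$ with $\highlazy\ge\higheager$; since both prices lie weakly above~$\monres_i$, regularity of the single-bidder revenue curve $p\mapsto p(1-\dist_i(p))$ gives that the lower (eager) price yields weakly more revenue from~$i$.

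Your argument instead invokes the revenue--virtual-surplus identity and then observes, pointwise in~$\bv$, that the eager auction serves the same agent as the lazy auction whenever the lazy auction sells, and otherwise either sells to someone with nonnegative virtual value or to nobody. This is shorter once Myerson is in hand, and the pointwise allocation dominance is clean. The paper's approach buys two things: it keeps to the first-principles style the authors emphasize throughout (they explicitly frame the appendix as rederiving simple-versus-optimal results without virtual values), and the per-bidder threshold inequality $\paymentvcgl_i\ge\paymentvcge_i$ is exactly the lemma they reuse for the matroid generalization in Theorem~\ref{thm:eager-vs-lazy}, where it is proved via submodularity of the weighted-rank function. Your approach also extends to matroids (the eager winners are a superset of the lazy winners, all with nonnegative virtual value), so neither method is strictly more general here; they simply reflect different tastes in what to take as primitive.
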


\begin{proof}
The key idea of the proof is to consider the expected revenue \emph{from each bidder}, conditioning on the other bidders'
valuations.  We calculate the conditional expected revenues from the two auctions and compare them.  The computation of
the conditional revenue does not assume conditions on the reserve price or the valuation distribution; only in the last
step do we use the property of monopoly reserves and the valuation
distribution's regularity.  

Fix a bidder~$i$, we condition on all
other bidders' valuations.  Denote by $\highlazy$ the highest valuation among these other bidders, i.e., $\highlazy =
\max_{j} \{v_j \: | \: j \neq i\}$.  Denote by $\higheager$ the highest valuation among bidders except~$i$ who bid above
their reserve prices, i.e., $\higheager = \max_j \{v_j \: | \: j \neq i, v_j \geq \reserve_j\}$.  Clearly, $\highlazy
\geq \higheager$.

When $\highlazy \leq \reserve_i$, the revenue extracted from bidder~$i$ is exactly the same in VCG-E and VCG-L, because
in both auctions, bidder~$i$ will make a payment if and only if his bid is at least~$\reserve_i$, in which case he
pays~$\reserve_i$.  Therefore we need only consider the case $\highlazy > \reserve_i$.

In VCG-L, bidder~$i$ makes a payment of~$\highlazy$ if and only if his valuation is above~$\highlazy$ (since $\highlazy
> \reserve_i$). The expected
revenue from him is then $\highlazy (1 - \dist_i(\highlazy))$.  In VCG-E, bidder~$i$ makes a payment if and only if
his valuation is above both $\reserve_i$ and $\higheager$, and the payment is $\paymenteager_i = \max \{\reserve_i,
\higheager\}$.  The expected revenue from~$i$ is therefore $\paymenteager_i (1 - \dist_i(\paymenteager_i))$.  Note that
$\highlazy \geq \max \{\reserve_i, \higheager\} = \paymenteager_i$.

But the revenues $\highlazy(1 - \dist_i(\highlazy))$ and $\paymenteager_i (1 - \dist_i(\paymenteager_i))$ are simply the expected
revenue we get by setting a price of $\highlazy$ or $\paymenteager_i$ to
bidder~$i$, respectively.  By regularity of the valuation distribution, the expected revenue monotonically decreases as we raise the price above the monopoly reserve.  
Since $\highlazy \geq \paymenteager_i \geq \reserve_i = \monres_i$, we have
\begin{align*}
\highlazy(1 - \dist_i(\highlazy)) \leq \paymenteager_i (1 - \dist_i(\paymenteager_i)).
\end{align*}

The above is a conditional analysis, but we see that the expected revenue in VCG-LM from each bidder~$i$ is no more than
the expected revenue in VCG-EM from the same bidder, no matter what the other bidders bid.  Our theorem immediately
follows.  
\end{proof}

\begin{theorem}
\label{thm:eager-vs-lazy}
For independent private value settings with matroid feasibility constraints and
regular valuation distributions, the VCG-E auction has weakly better social
welfare than VCG-L, and when the reserves are at least the monopoly reserves,
the VCG-E auction also obtains weakly more revenue than VCG-L.
\end{theorem}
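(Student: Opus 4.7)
The plan is to generalize the per-bidder conditional revenue argument of Theorem~\ref{thm:eager-beats-lazy} to the matroid setting. The welfare comparison is structural and does not use the monopoly-reserve assumption: let $W$ be the max-weight base chosen by VCG-L in its first stage; then the set of bidders who actually receive service in VCG-L is $S_L = \{i \in W : v_i \ge r_i\}$. This set is independent in the matroid (as a subset of $W$) and is contained in the set $\{j : v_j \ge r_j\}$ of bidders kept by VCG-E's filtering step, so $S_L$ is a feasible candidate in VCG-E's max-weight selection. Hence VCG-E's chosen set has total weight at least $\sum_{i \in S_L} v_i$, which is exactly the welfare of VCG-L.

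For the revenue claim, I would fix an arbitrary bidder $i$ and condition on the other bidders' valuations $\bs_{-i}$. Let $t_i^L$ denote $i$'s VCG threshold in VCG-L, computed against all other bidders, and let $t_i^E$ denote $i$'s VCG threshold in VCG-E, computed against only those $j \ne i$ with $v_j \ge r_j$; both are deterministic once $\bs_{-i}$ is fixed. Paralleling the single-item analysis, the conditional revenue extracted from $i$ by VCG-L equals $p_L \cdot (1 - F_i(p_L))$ with $p_L = \max(t_i^L, r_i)$, and the conditional revenue in VCG-E equals $p_E \cdot (1 - F_i(p_E))$ with $p_E = \max(t_i^E, r_i)$, since in each auction bidder $i$ wins exactly when $v_i$ exceeds the corresponding threshold and then pays that threshold.

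The critical step is the monotonicity $t_i^E \le t_i^L$. I would use the characterization that $i$'s VCG threshold against a competitor set $C$ equals the infimum of $t$ for which $\{j \in C : v_j > t\}$ does not span $i$ in the matroid; removing bidders from $C$ (those below their reserves) weakly shrinks $\{j \in C : v_j > t\}$ at every $t$ and hence weakly shrinks its span, so the threshold weakly decreases. Combined with the hypothesis $r_i \ge \monres_i$, this gives $\monres_i \le p_E \le p_L$, and by regularity the function $p \mapsto p(1 - F_i(p))$ is non-increasing on $[\monres_i, \infty)$, yielding $p_E(1 - F_i(p_E)) \ge p_L(1 - F_i(p_L))$ pointwise in $\bs_{-i}$. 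Taking expectations over $\bs_{-i}$ and summing over $i$ completes the proof. The main obstacle is precisely this matroid-threshold monotonicity, which is the one place where the single-item proof (where the threshold is simply the maximum of the competing values) does not transfer verbatim; phrasing the threshold via matroid span makes the monotonicity transparent.
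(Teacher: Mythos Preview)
Your proposal is correct and follows the same overall architecture as the paper: fix a bidder~$i$, condition on $\bs_{-i}$, show the VCG threshold in the eager version is weakly smaller than in the lazy version, and then invoke regularity to conclude that the posted-price revenue $p(1-F_i(p))$ is weakly larger at the smaller price since both prices lie at or above~$\monres_i$.

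The one substantive difference is in how the key inequality $t_i^E \le t_i^L$ is established. You use the span characterization of the matroid VCG threshold (bidder~$i$ wins at value~$t$ iff $\{j \in C : v_j > t\}$ does not span~$i$) and then argue directly that deleting competitors shrinks the span at every level~$t$. The paper instead rewrites $i$'s VCG payment as $f(C) - (f(C \cup \{i\}) - v'_i)$ for the weighted-rank function $f(S) = \max_{T \subseteq S,\, T \in \feasset} \sum_{j \in T} v'_j$ with $v'_i$ artificially inflated, and then appeals to submodularity of~$f$ to compare the marginal of~$i$ over $[n]\setminus\{i\}$ versus over the filtered set~$U$. Both arguments are standard matroid facts and are essentially equivalent; yours is arguably more direct for this application, while the paper's makes the role of submodularity (and hence the failure point for non-matroid downward-closed systems, cf.\ their Remark~\ref{remark:non-matroid}) explicit. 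Your separate welfare argument, which the paper leaves as ``not too hard to see,'' is also correct.
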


\begin{proof}
As in the proof of \autoref{thm:eager-beats-lazy}, we focus on the expected revenue from a fixed bidder~$i$, conditioning on all
other bidders' valuations.  Let $\paymentvcgl_i$ and $\paymentvcge_i$ denote the VCG payment for bidder~$i$ in VCG-L
and VCG-E, respectively.  (Note that the actual payment to be made by bidder~$i$ in these auctions is the higher of the
reserve~$\reserve_i$ and the VCG threshold; but we will focus mostly on the VCG thresholds in this proof.)  The key step is to show $\paymentvcgl_i \geq \paymentvcge_i$.  Recall that in VCG-E, bidders with
valuations below their reserve prices are excluded from both the allocation and payment calculation.  Effectively, such
bidders' bids are replaced by~$0$ in the calculation of VCG payments.  This constitutes the only difference in computing
$\paymentvcgl_i$ and $\paymentvcge_i$.  It therefore suffices to show that bidder~$i$'s VCG payment weakly decreases as
other bidders' bids are zeroed out.

We show this by expressing $\paymentvcgl_i$ and $\paymentvcge_i$ in terms of a submodular function.  Define a set function $f:
2^{[n]} \to \mathbb R$ by 
\begin{align*}
f(S) = \max_{T \subseteq S, T \in \feasset} \sum_{j \in T} v'_j,
\end{align*} 
where $v'_j = v_j$ for $j \neq i$ and $v'_i = \sum_{j \neq i} v_j + 1$.  Then $\paymentvcgl_i =f([n]
\setminus \{i\}) -  (f([n] -  v'_i)$.  This is because $f([n] \setminus \{i\})$ represents the social welfare of the other
bidders if $i$ were not present, and $f([n]) - v'_i$ represents the social welfare of the other bidders if $i$ were to
be a winner ($v'_i$ is set high enough such that the optimal set has to include~$i$).  The difference between the two is
then the externality that $i$ imposes on the other bidders by his winning, and therefore is equal to the VCG threshold
$\paymentvcgl_i$.  Similarly, let $U$ denote the set of bidders whose valuations are above their reserve prices, i.e.,
$U = \{j \: | \: v_j \geq \reserve_j\}$, then $\paymentvcge_i = f(U) - (f(U \cup \{i\}) - v'_i)$.

$f$ is defined to be the result of a linear maximization over a matroid, and is well known to be submodular.  Therefore,
as $U \subseteq [n]$, $f([n]) - f([n] \setminus \{i\}) \leq f(U \cup \{i\})
- f(U)$.  This gives $\paymentvcgl_i \geq \paymentvcge_i$.

With this, the proof to \autoref{thm:eager-beats-lazy} easily generalizes by replacing $\highlazy$ and $\higheager$ with
$\paymentvcgl_i$ and $\paymentvcge_i$, respectively, and we omit the rest of the proof.
\end{proof}

\begin{corollary}[\citealp{HR09}]
 \label{cor:vcg-e-matroid}
In matroid settings where bidders' valuations are drawn independently from
regular distributions, the VCG-E auction with monopoly reserve prices gives at least half of the optimal revenue.
\end{corollary}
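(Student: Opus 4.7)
The plan is to chain two results that have already been established earlier in the appendix, so the corollary will follow by a short composition. First I would invoke Theorem~\ref{thm:eager-vs-lazy} with the reserve vector set equal to the monopoly reserves $(\monres_1, \ldots, \monres_n)$. Since this vector trivially satisfies the hypothesis of being at least the monopoly reserves, and since we are in an independent private value matroid setting with regular distributions, the theorem immediately gives
\[
\REV(\text{VCG-E with monopoly reserves}) \;\ge\; \REV(\text{VCG-L with monopoly reserves}).
\]

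Next I would apply Corollary~\ref{cor:vcg-el-matroid}, which asserts that under exactly these hypotheses VCG-L with monopoly reserves already obtains at least half of the optimal expected revenue. Chaining the two inequalities yields
\[
\REV(\text{VCG-E with monopoly reserves}) \;\ge\; \REV(\text{VCG-L with monopoly reserves}) \;\ge\; \tfrac{1}{2}\,\mathrm{OPT},
\]
which is precisely the statement of the corollary.

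There is no real obstacle at this stage: the genuine content has been absorbed into Theorem~\ref{thm:eager-vs-lazy}, whose nontrivial step is the pointwise domination argument showing that, for any fixed realization of the other bidders' valuations, each bidder's VCG threshold under eager removal is no larger than under lazy handling (proved via submodularity of the matroid welfare function), combined with the regularity-based observation that the revenue curve is decreasing above the monopoly reserve. Given that lemma and the lookahead-style $2$-approximation of Corollary~\ref{cor:vcg-el-matroid}, the present corollary is a clean two-line consequence and no further calculation is needed.
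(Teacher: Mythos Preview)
Your proposal is correct and matches the paper's approach: the corollary is stated immediately after Theorem~\ref{thm:eager-vs-lazy} with no separate proof, precisely because it follows by chaining that theorem (VCG-E dominates VCG-L in revenue when reserves are at least the monopoly reserves) with Corollary~\ref{cor:vcg-el-matroid} (VCG-L with monopoly reserves is a $2$-approximation). Your two-line derivation is exactly what the paper intends.
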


\begin{remark}
\label{remark:non-matroid}
For more general environments (even downward closed), it need not be true that $\paymentvcgl_i$ is at least
$\paymentvcge_i$.  For example, consider $\feasset = \{\emptyset, \{1\}, \{2\}, \{1, 2\}, \{3\}\}$.  In this
feasibility system, bidder~$1$'s VCG payment weakly increases when $v_2$ decreases.  So $\paymentvcgl_1$ can
be lower than $\paymentvcge_1$.
\end{remark}

\end{document}